\newcommand{\Rmnum}[1]{\expandafter\@slowromancap\romannumeral #1@}
\newtheorem{theorem}{Theorem}
\newtheorem{proposition}{Proposition}
\newtheorem{definition}{Definition}
\newtheorem{lemma}{Lemma}
\newtheorem{example}{Example}
\newtheorem{remark}{Remark}
\newenvironment{proof}{\textit{Proof}:}{\hfill $\blacksquare$\par}
\definecolor{darkgreen}{cmyk}{0.91,0,0.88,0.32}
\definecolor{orange}{cmyk}{0.0,.5,1,0.1}
\definecolor{purple}{cmyk}{0.2,1,0,0.1}
\definecolor{pink}{cmyk}{0,0.4,0.4,0}
\definecolor{gray}{cmyk}{0.2,0.2,0.2,0}
\definecolor{greener}{cmyk}{0.91,0,0.88,0.2}
\begin{document}

\IEEEspecialpapernotice{{\rm This work has been submitted to the IEEE for possible publication. Copyright may be transferred without notice, after which this version may no longer be accessible.}}
%\pagestyle{fancy}
%\fancyhead[C]{\large This work has been submitted to the IEEE for possible publication. Copyright may be transferred without notice, after which this version may no longer be accessible.}
%
% paper title
% Titles are generally capitalized except for words such as a, an, and, as,
% at, but, by, for, in, nor, of, on, or, the, to and up, which are usually
% not capitalized unless they are the first or last word of the title.
% Linebreaks \\ can be used within to get better formatting as desired.
% Do not put math or special symbols in the title.
%\title{Verification of Nonblockingness in Bounded Petri Nets: A Novel Semi-Structural Approach}
\title{Verification of Nonblockingness in Bounded Petri Nets With Minimax Basis Reachability Graphs}

%
%
% author names and IEEE memberships
% note positions of commas and nonbreaking spaces ( ~ ) LaTeX will not break
% a structure at a ~ so this keeps an author's name from being broken across
% two lines.
% use \thanks{} to gain access to the first footnote area
% a separate \thanks must be used for each paragraph as LaTeX2e's \thanks
% was not built to handle multiple paragraphs
%

%\author{Chao Gu, Ziyue Ma, Zhiwu Li and Alessandro Giua% <-this % stops a space
\author{Chao~Gu,~\IEEEmembership{Student Member,~IEEE}
        ~Ziyue Ma,~\IEEEmembership{Member,~IEEE}
        ~Zhiwu Li,~\IEEEmembership{Fellow,~IEEE}
        and~Alessandro\\ Giua,~\IEEEmembership{Fellow,~IEEE}
\thanks{This work is partially supported by the National Key R\&D Program of China under Grant 2018YFB1700104, the National Natural Science Foundation of China under Grants 61873342 and 61472295, the Shaanxi Provincial Natural Science Foundation under Grant No. 2019JQ-022, the Fundamental Research Funds for the Central Universities under Grant JB190407, the Science and Technology Development Fund, MSAR, under Grant No. 0012/2019/A1, and the Fund of China Scholarship Council under Grant No. 201806960056 (\textit{Corresponding author: Zhiwu Li}).}% <-this % stops a space
\thanks{C. Gu is with the School of Electro-Mechanical Engineering, Xidian University, Xi'an 710071, China, and also with DIEE, University of Cagliari, Cagliari 09124, Italy
        {\tt\small cgu1992@stu.xidian.edu.cn}}%
\thanks{Z. Ma is with the School of Electro-Mechanical Engineering, Xidian University, Xi'an 710071, China
        {\tt\small maziyue@xidian.edu.cn}}
\thanks{Z. Li is with the School of Electro-Mechanical Engineering, Xidian University, Xi'an 710071, China, and also with the Institute of Systems Engineering, Macau University of Science and Technology, Macau {\tt\small zhwli@xidian.edu.cn}}
\thanks{A. Giua is with DIEE, University of Cagliari, Cagliari 09124, Italy {\tt\small giua@unica.it}}
}

\maketitle

%\thispagestyle{fancy} ? ? ? ?
%\fancyhead{} ? ? ? ? ? ? ? ? ? ?
%\chead{This work has been submitted to the IEEE for possible publication. Copyright may be transferred without notice, after which this version may no longer be accessible.}
%\cfoot{\quad}
%\renewcommand{\headrulewidth}{0pt} ? ?
%
%\renewcommand{\footrulewidth}{0pt}
\thispagestyle{empty}
\pagestyle{empty}
% make the title area

% As a general rule, do not put math, special symbols or citations
% in the abstract or keywords.

\begin{abstract}                          % Abstract of not more than 200 words.
This paper proposes a semi-structural approach to verify the \textit{nonblockingness} of a Petri net.
We construct a structure, called \textit{minimax basis reachability graph} (minimax-BRG): it provides an abstract description of the reachability set of a net while preserving all information needed to test if the net is \textit{blocking}.
We prove that a bounded deadlock-free Petri net is \textit{nonblocking} if and only if its minimax-BRG is \textit{unobstructed}, which can be verified by solving a set of integer constraints and then examining the minimax-BRG.
For Petri nets that are not deadlock-free, one needs to determine the set of deadlock markings. This can be done with an approach based on the computation of \textit{maximal implicit firing sequences} enabled by the markings in the minimax-BRG.
%{\red The approach we developed achieves significant practical efficiency, as shown by means of numerical simulations.}
%{\blue The approach we developed does not require the construction of the reachability graph and achieves practical efficiency, as shown by means of numerical simulations.}
The approach we developed does not require the construction of the reachability graph and has wide applicability.
%The approach we developed does not require exhaustive exploration of the state space and therefore achieves significant practical efficiency, as shown by means of numerical simulations.
\end{abstract}

\begin{IEEEkeywords}
Petri net, basis reachability graph, nonblockingness.
\end{IEEEkeywords}
%\end{frontmatter}

\section{Introduction}\label{Section1}

As discrete event models, Petri nets are commonly used in the framework of \textit{supervisory control theory} (SCT) \cite{SL, c27, luo2009supervisor, wang2013design}.
%However, the application of SCT using automaton models is hindered by the \textit{state explosion} problem.
From the point of view of computational efficiency, Petri nets have several advantages over simpler models such as automata \cite{c1, c27, Murata}: since states in Petri nets are not explicitly represented in the model in many cases, and structural analysis and linear algebraic approaches can be used without exhaustively enumerating the state space of a system.
%{\red It should be noted, however, many analysis techniques for Petri nets are based on the construction of the \textit{reachability graph}, which suffers from the same state explosion problem typical of automata. To take advantage of this compact representation, other techniques should be exercised.}

A suite of supervisory control approaches in discrete event systems focuses on an essential property, namely \textit{nonblockingness} \cite{RW, c11}. As defined in \cite{RW}, nonblockingness is a property prescribing that all reachable states should be \textit{co-reachable} to a set of \textit{final states} representing the completions of pre-specified tasks.
%A system that is not nonblocking may not successfully finish its tasks, which incurs unnecessary costs on-stream.
Consequently, to verify and ensure the nonblockingness of a system is a problem of primary importance in many applications and should be addressed with state-of-the-art techniques.
%However, as is known, to ensure nonblockingness remains a challenge due to the optimal nonblocking supervisory control is proved to be NP-hard \cite{gohari2000complexity}.

%{\orange ***In the literatures of Petri nets, how people have done to solve/enforce nonblockingness, and why it is difficult to be solved.***}
The nonblockingness verification (NB-V) problem in automata can be solved in a relatively straightforward manner. The authors in \cite{clin} address several sufficient conditions for nonblockingness; however, they are not very suitable for systems that contain complex feedback paths. In \cite{leduc2000hierarchical, leduc2005hierarchical}, a method called \textit{hierarchical interface-based supervisory control}, which consists in breaking up a plant into two subsystems and restricting the interaction between them, is developed to verify if a system is nonblocking.
Based on the \textit{state tree structure}, \cite{c24} studies an efficient algorithm for nonblocking supervisory control design in reasonable time and memory cost.
To mitigate the \textit{state explosion} problem, in the framework of \textit{compositional verification} an abstraction approach is proposed in \cite{mohajerani2016framework} to verify discrete event systems modelled by extended finite-state machines (EFSMs) and such verification approach is typically designed for large models consisting of several EFSMs that interact both via shared events and variables.
Based on the automaton \textit{abstraction} technique, the work in \cite{su2010aggregative} presents an \textit{aggregative synthesis approach} to obtain nonblocking supervisors in a distributed way.

Using Petri net models, the works in \cite{c1, c11} study NB-V and enforcement from the aspect of \textit{Petri net languages}; however, these methods rely on the construction and analysis of the \textit{reachability graph}, which is practically inefficient.
A different approach based on the \textit{theory of regions}\cite{uzam2002optimal} was used in \cite{ghaffari2003design} to design a maximally permissive controller ensuring the nonblockingness of a system; however, it still requires an exhaustive enumeration of the state space.
For a class of Petri nets called \textit{G-systems}, \cite{zhao2013iterative} reports a deadlock prevention policy that can usually lead to a nonblocking supervisor with high computational efficiency but cannot guarantee maximally permissive behavior.
A nonblocking and maximally permissive supervisor with a distributed architecture is designed in \cite{hu2015maximally} and modelled by a class of Petri nets namely $BS^{4}R$.

As is known, the difficulty of enforcing nonblockingness lies in the fact that the optimal nonblocking supervisory control problem is \textit{NP-hard} \cite{gohari2000complexity}.
Moreover, the problem of efficiently verifying nonblockingness of a Petri net without constructing its reachability graph remains open to date.
By this motivation, in this paper, we aim to develop a method to cope with the NB-V problem in Petri nets.

A state-space abstraction technique in Petri nets, called  \emph{basis reachability graph (BRG)} approaches, was recently proposed in \cite{c8,Basis}. In these approaches, only a subset of the reachable markings, called \emph{basis markings}, are enumerated.
This method can be used to solve \emph{marking reachability} \cite{c3}, \emph{diagnosis} \cite{c8,Basis} and \emph{opacity} problems \cite{c10} efficiently. Thanks to the BRG, the state explosion problem can be mitigated and the related control problems can be solved efficiently. The BRG-based methods are \emph{semi-structural} since only basis markings are explicitly enumerated in the BRG while all other reachable markings are abstracted by linear algebraic equations.
%{\blue Inspired by the classical BRG-based methodology, in this paper, we will develop a novel semi-structural approach to verify the nonblockingness of a Petri net.}

%Supervisory control theory of discrete event systems \cite{RW,RW2,RW3,c24} involves a number of important notions, such as \textit{controllability}, \textit{observability}, and \textit{nonblockingness}. Controllability is a property that in a controlled system a supervisor never attempts to disable an uncontrollable event. The work in \cite{c2} {\red uses} BRG to verify {\red the} controllability of a Petri net, where the controllability of each basis marking is determined by solving an \emph{integer linear programming problem} (ILPP). Such a way of controllability verification is practically efficient since it does not require an exhaustive enumeration of the state space.
%
%In order to make appropriate control choices, it is often required that the controller should have the overall knowledge of system states. However, direct state knowledge is usually {\red not available} due to the limitations of sensing.
%Fundamentally identified in \cite{Observability}, in discrete event systems, the property of observability allows an observer to estimate states that cannot be measured \cite{Observability3, Observability4}.

On the other hand, in our previous work \cite{Gu} we show that the standard BRG cannot be directly used to solve the NB-V problem due to the possible presence of \textit{livelocks} and \textit{deadlocks}.
In particular, livelocks describe an undesirable non-dead repetitive behavior such that the system is bound to evolve along a particular subset of its reachability space.
%Once a marking in a livelocked component is reached, the future evolution will remain within this component.
Thus, a Petri net is blocking if a livelock that contains no final markings is reachable.
However, the set of markings that form a livelock is usually hard to characterize and is not encoded in the classical BRG of the system.
%In our preliminary work in \cite{Gu}, we proved that for deadlock-free Petri nets, the {\red NB-V problem} can be {\red addressed} by constructing a structure called \textit{expanded BRG} and checking nonblockingness of each node it contains. However, the efficiency of this approach needs to be further improved.
As a countermeasure, preliminary results are presented in \cite{Gu} to show how it is possible to tailor the BRG to detect livelocks.
In more detail, a structure named the \textit{expanded BRG} is proposed, which expands the BRG so that all markings in $R(N, M_0)$ reached by firing a sequence of transitions ending with an explicit transition are included. The set of markings in an expanded BRG is denoted as the \textit{expanded basis marking} set $\mathcal{M_{B_E}}$.
However, this approach presents two major drawbacks. First, it only applies to deadlock-free nets, which is an undesirable restriction considering that dead non-final markings are one of the causes of blockingness. Second, while the expanded BRG can abstract part of the reachability set, its size can still be very large and its practical efficiency needs to be improved.

When a system is not deadlock-free, a dead marking in the state space characterizes a condition from which the system cannot further advance \cite{c20}. If there exists a dead marking that is not final (we call such a state a \textit{non-final deadlock}), the system is blocking.

Inspired by the classical BRG-based methodology, in this paper, we develop a semi-structural approach to tackle the NB-V problem. The contribution consists of three aspects:

\begin{itemize}
   \item [$-$]We propose a structure called \textit{minimax basis reachability graph} (minimax-BRG). In minimax-BRGs, only part of the state space, namely \textit{minimax basis markings}, is encoded and all other markings can be characterized as the integer solutions of a linear constraint set.
   \item [$-$] Owing to properties of the minimax-BRG, when a bounded Petri net is known to be deadlock-free, we prove that it is nonblocking if and only if its minimax-BRG consists of all nonblocking nodes (such a minimax-BRG is said to be \textit{unobstructed}), which can be verified by solving a set of integer constraints and then examining the minimax-BRG.
  % \item [$-$] finally, we generalize the results to systems that are not deadlock-free.
%       Our approach does not require the construction of the reachability graph in prior and achieves practical efficiency in the considered cases, as shown by numerical simulations.
   \item [$-$] We generalize the results to arbitrary bounded Petri nets (not necessarily be deadlock-free) and propose a necessary and sufficient condition for NB-V.
       Numerical results demonstrate the proposed approach.
       %Our approach does not require constructing the reachability graph and has wide applicability since it does not depend on specific substructures of the Petri net.
%   Hence, the {\red NB-V problem} of nets that are not deadlock-free can be {\red addressed} by first determining the non-final deadlocks followed by checking its minimax-BRG. The approach we developed does not require exhaustive exploration of the state space and therefore achieves significant practical efficiency.
\end{itemize}

The rest of the paper is organized as follows. Some basic concepts and formalisms used in the paper are recalled in Section \uppercase\expandafter{\romannumeral2}. Section \uppercase\expandafter{\romannumeral3} dissects the NB-V problem.
Section \uppercase\expandafter{\romannumeral4} introduces the minimax-BRG.
Section \uppercase\expandafter{\romannumeral5} investigates how minimax-BRGs can be applied to solving the NB-V problem.
Numerical analyses are given in Section \uppercase\expandafter{\romannumeral6}, while discussions are reported in Section \uppercase\expandafter{\romannumeral7}.
Conclusions and future work are given in Section \uppercase\expandafter{\romannumeral8}.
%Section $5$ develops a novel structure named the minimax-BRG and exposes a sufficient and necessary condition for nonblockingness verification of a deadlock-free system.
%In Section \uppercase\expandafter{\romannumeral5}, we generalize the above results for the systems that are not deadlock-free.
%Numerical analyses are given in Section \uppercase\expandafter{\romannumeral6}.

%
%Section \uppercase\expandafter{\romannumeral7} draws conclusions and discusses future work.

\section{Preliminaries}\label{sec3/2}
%In this section, we recall the main notions related to automata, Petri nets \cite{Murata}, and basis markings \cite{c3,c8,Basis} used in the paper.
\subsection{Automata and Petri nets}
An automaton \cite{RW3} is a five-tuple $A=(X,\Sigma,\eta,x_0,X_m)$, where $X$ is a set of \textit{states}, $\Sigma$ is an alphabet of \textit{events}, $\eta : X \times \Sigma\rightarrow X$ is a \textit{state transition function}, $x_0\in X$ is an \textit{initial state} and $X_m\subseteq X$ is a set of \textit{final states} (also called \textit{marker states} in \cite{RW}).
$\eta$ can be extended to a function $\eta : X \times \Sigma^{*}\rightarrow X$.

A state $x\in X$ is \emph{reachable} if $x = \eta(x_0, s)$ for some $s\in \Sigma^*;$ it is \emph{co-reachable} if there exists $s^{\prime}\in \Sigma^*$ such that $\eta(x, s^{\prime})\in X_m$.
An automaton is said to be \emph{nonblocking} if every reachable state is co-reachable.

%
%\subsection{Petri Nets}\label{PNbasic}
A Petri net \cite{Murata} is a four-tuple $N=(P,T,Pre,Post)$, where $P$ is a set of $m$ \textit{places} (graphically represented by circles) and $T$ is a set of $n$ \textit{transitions} (graphically represented by bars).
%$P$ and $T$ are finite and disjoint sets, i.e., $P\neq\emptyset$, $T\neq\emptyset$, and $P\cap{T}=\emptyset$.
$Pre: P\times T\rightarrow \mathbb{N}$ and $Post: P\times T\rightarrow \mathbb{N}$ ($\mathbb{N}=\{0, 1, 2, \cdots\}$) are the \textit{pre}- and \textit{post}- \textit{incidence functions} that specify the \textit{arcs} directed from places to transitions, and vice versa in the net, respectively.
%and are represented as matrices in $\mathbb{N}^{m\times n}$ ($\mathbb{N}=\{0, 1, 2, \cdots\}$).
The \emph{incidence matrix} of $N$ is defined by $C=Post-Pre$.
A Petri net is \textit{acyclic} if there are no directed cycles in its underlying digraph.
%A Petri net $N=(P, T, Pre, Post)$ is said to be \textit{ordinary} if for all $p\in P$ and $t\in T$, $Pre(p, t)\in \{0, 1\}$ and $Post(p, t)\in \{0, 1\}$; otherwise it is said to be a \textit{generalized net}.

Given a Petri net $N=(P,T,Pre,Post)$ and a set of transitions $T_x\subseteq T$, the \textit{$T_x$-induced sub-net} of $N$ is a net resulting by removing all transitions in $T\setminus T_x$ and corresponding arcs from $N$, denoted as $N_x=(P,T_x,Pre_x,Post_x)$ where $T_x\subseteq T$ and $Pre_x$ ($Post_x$) is the restriction of $Pre$ ($Post$) to $P$ and $T_x$.
The incidence matrix of $N_x$ is denoted by $C_x = Post_x-Pre_x$.

A \emph{marking} $M$ of a Petri net $N$ is a mapping: $P\to\mathbb{N}$ that assigns to each place of a Petri net a non-negative integer number of \textit{tokens}. The number of tokens in a place $p$ at a marking $M$ is denoted by $M(p)$.
A Petri net $N$ with an initial marking $M_0$ is called a \textit{marked net}, denoted by $\langle N, M_0\rangle$.

%Let $p\in{P}$ be a place of a Petri net $N$. It is marked at $M$ if $M(p)>0$.
%A set of places $D\subseteq{P}$ is marked at $M$ if at least one place in $D$ is marked, viz., $\exists p\in{D}, M(p)>0$.
%
For a place $p\in P$, the \textit{set of its input transitions} is defined by $^{\bullet}p=\{t\in T\mid Post(p,t)>0\}$ and the \textit{set of its output transitions} is defined by $p^{\bullet}=\{t\in T\mid Pre(p,t)>0\}$. The notions for $^{\bullet}t$ and $t^{\bullet}$ are analogously defined.

%Given a place (transition) $p\ (t)$, the elements in its preset are called the Pre-transitions (Pre-places) of $p\ (t)$, while the Postset of $p\ (t)$ is named as Post-transitions (Post-places).
A transition $t\in T$ is \emph{enabled} at a marking $M$ if $M\geq Pre(\cdot, t)$\footnote{We use $A(\cdot, x)$ ($A(x, \cdot)$) to denote the column (row) vector corresponding to the element $x$ in matrix $A$.}, denoted by $M[t\rangle$. If $t$ is enabled at $M$, the \emph{firing} of $t$ yields marking $M^{\prime}=M+C(\cdot, t)$, which is denoted as $M[t\rangle M^{\prime}$. A marking $M$ is \textit{dead} if for all $t\in T$, $M\ngeqslant Pre(\cdot, t)$.
%\footnote{Here $Pre(\cdot, t)$ and $C(\cdot, t)$ denote, respectively, the column vector corresponding to transition $t$ in matrix $Pre$ and $C$.}

%A Petri net is said to be free of self-loop if there do not exist a place $p$ and a transition $t$ such that $(p,t)\in F$ and $(t,p)\in F$. A self-loop-free Petri net can be represented by an incidence matrix $[N](p,t)=W(t,p)-W(p,t)$ that is an integer matrix indexed by $P$ and $T$.
Marking $M^{\prime}$ is \emph{reachable} from $M_{1}$ if there exist a sequence of transitions $\sigma=t_{1}t_{2}\cdots t_{n}$ and markings $M_{2},\cdots, M_{n}$ such that $M_{1}[t_{1}\rangle M_{2}[t_{2}\rangle\cdots M_{n}[t_{n}\rangle M^{\prime}$ holds. When $\sigma = \epsilon$, where $\epsilon$ denotes the empty sequence, then it holds that $M[\sigma\rangle M$.
We denote by $T^*$ the set of all finite sequences of transitions over $T$.
Given a transition sequence $\sigma\in T^{*}$, $\varphi: T^{*}\rightarrow \mathbb{N}^{n}$ is a function that associates to $\sigma$ a vector $\textbf{y}=\varphi(\sigma)\in \mathbb{N}^{n}$, called the \textit{firing vector} of $\sigma$, i.e., $\textbf{y}(t) = k$ if transition $t\in T$ appears $k$ times in $\sigma$. In particular, it holds that $\varphi(\epsilon) = \textbf{0}$.
Let $\varphi^{-1}: \mathbb{N}^{n}\rightarrow T^{*}$ be the inverse function of $\varphi$, namely for $\textbf{y}\in \mathbb{N}^{n}$, $\varphi^{-1}(\textbf{y}):=\{\sigma\in T^{*}| \varphi(\sigma)=\textbf{y}\}$.

The set of markings reachable from $M_{0}$ is called the \emph{reachability set} of $\langle N, M_0\rangle$, denoted by $R(N, M_{0})$.
A marked net $\langle N, M_0\rangle$ is said to be \emph{bounded} if there exists an integer $k\in \mathbb{N}$ such that for all $M\in R(N, M_0)$ and for all $p\in P$, $M(p)\leq k$ holds.

\begin{proposition}{\rm\cite{c8,Murata}}\label{ProX}
{\rm Given a marked net $\langle N, M_0\rangle$ where $N$ is acyclic, $M\in R(N, M_0)$, $M^{\prime}\in R(N, M_0)$ and a firing vector $\textbf{y}\in \mathbb{N}^n$, the following holds:
\begin{center}
$\hspace{5mm}M^{\prime}=M+C\cdot \textbf{y}\geq \textbf{0}\Leftrightarrow (\exists \sigma\in \varphi^{-1}(\textbf{y}))\ M[\sigma\rangle M^{\prime}.\hfill\square$
\end{center}}
\end{proposition}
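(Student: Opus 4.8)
The plan is to prove the two implications separately. The implication ``$\Leftarrow$'' is immediate: if $M[\sigma\rangle M'$ with $\varphi(\sigma)=\textbf{y}$, then applying the firing rule along $\sigma$ gives $M'=M+C\cdot\varphi(\sigma)=M+C\cdot\textbf{y}$, and $M'$, being a marking, is componentwise nonnegative; this direction uses neither acyclicity nor reachability.

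For ``$\Rightarrow$'' I would assume $M'=M+C\cdot\textbf{y}\geq\textbf{0}$ and induct on the firing count $\|\textbf{y}\|_1:=\sum_{t\in T}\textbf{y}(t)$. The base case $\|\textbf{y}\|_1=0$ is trivial ($\textbf{y}=\textbf{0}$, $M'=M$, $\sigma=\epsilon$). The heart of the inductive step is the claim that, whenever $\textbf{y}\neq\textbf{0}$ and $M+C\cdot\textbf{y}\geq\textbf{0}$, at least one transition $t$ with $\textbf{y}(t)>0$ is enabled at $M$. To establish it I would let $T_y=\{t\in T\mid\textbf{y}(t)>0\}$, consider the $T_y$-induced subnet (acyclic because $N$ is), and take $t^\ast$ to be first in a topological ordering of this subnet, so that no transition of $T_y$ reaches $t^\ast$ along a directed path. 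Then no transition of $T_y$ can output to an input place $p\in{}^\bullet t^\ast$, since arcs $t'\to p$ and $p\to t^\ast$ would make $t'$ precede $t^\ast$ (or, for $t'=t^\ast$, form a cycle), contradicting the choice of $t^\ast$. Hence $Post(p,t)=0$ for all $t\in T_y$ and all $p\in{}^\bullet t^\ast$, so for each such $p$,
\[
(C\cdot\textbf{y})(p)=-\sum_{t\in T_y}Pre(p,t)\,\textbf{y}(t)\leq -Pre(p,t^\ast)\,\textbf{y}(t^\ast)\leq -Pre(p,t^\ast);
\]
together with $M(p)+(C\cdot\textbf{y})(p)=M'(p)\geq 0$ this gives $M(p)\geq Pre(p,t^\ast)$ for every $p\in{}^\bullet t^\ast$, i.e., $M[t^\ast\rangle$.

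Given the claim, I would close the induction by firing $t^\ast$, obtaining $M[t^\ast\rangle M_1$ with $M_1=M+C(\cdot,t^\ast)\in R(N,M_0)$, setting $\textbf{y}'=\textbf{y}-\varphi(t^\ast)\geq\textbf{0}$, and noting $M_1+C\cdot\textbf{y}'=M'\geq\textbf{0}$ with $\|\textbf{y}'\|_1=\|\textbf{y}\|_1-1$; the induction hypothesis then supplies $\sigma'\in\varphi^{-1}(\textbf{y}')$ with $M_1[\sigma'\rangle M'$, and $\sigma=t^\ast\sigma'$ is the desired sequence.

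I expect the claim to be the only real obstacle, and it genuinely needs acyclicity: a single empty place carrying a self-loop transition has a zero column in $C$, so $M+C\cdot\textbf{y}\geq\textbf{0}$ holds vacuously while the transition is never enabled. Everything else reduces to bookkeeping with firing vectors.
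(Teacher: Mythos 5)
The paper states this proposition as a known result cited from the literature and provides no proof of its own, so there is no in-paper argument to compare against. Your proof is correct and is the standard one for this classical fact: the backward direction is just the state equation, and the forward direction's induction on $\|\textbf{y}\|_1$ — with the key observation that a topologically minimal transition $t^\ast$ in the support of $\textbf{y}$ must be enabled, because acyclicity forbids any transition of the support from feeding tokens into ${}^{\bullet}t^\ast$ — is exactly the argument used in the cited sources, and your self-loop counterexample correctly pinpoints why acyclicity is indispensable.
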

Proposition \ref{ProX} shows that in acyclic nets, reachability can be characterized (necessary and sufficient condition) in simpler algebraic terms.

%A non-empty place subset $S\subseteq{P}$ is a \textit{siphon} if $^\bullet{S}\subseteq{S}^\bullet$. A siphon $S$ is minimal if the removal of any place from $S$ makes the fallacy of $^\bullet{S}\subseteq{S}^\bullet$.
%A siphon can also be described by its \textit{characteristic vector} $\textbf{s}\in\{0, 1\}^{m}$ such that $s_i=1$ if $p_i\in S$, otherwise $s_i=0\ (i\in\{1, 2, \cdots, m\})$.

%%%%%%%%%%%%%%%%%%%%%%%%%%%%%%%%%%%%%%%%%%
%%%%%%%%%%%%%%%%%%%%%%%%%%%%%%%%%%%%%%%%%%
%%%%%Original Characterization of M_F%%%%%
%Let $G=(N, M_0, \mathcal{M_F})$ denote a Petri marked net (plant) with initial marking $M_0$ and a set of final markings $\mathcal{M_F}$.
%$\mathcal{M_F}$ can be either given by explicitly listing all its members, or characterized by a \textit{generalized mutual exclusion constraint} (GMEC)\cite{c25}. A GMEC is a pair $(\textbf{w}, k)$, where $\textbf{w}\in \mathbb{N}^m$ and $k\in \mathbb{N}$, that defines a set of markings
%\begin{center}
%$\mathcal{L}_{(\textbf{w},k)}=\{M\in\mathbb{N}^m| \textbf{w}^T\cdot M \leq k\}.$
%\end{center}
%Hereinafter, we adopt the GMEC-based representation to characterize $\mathcal{M_F}$ in $G$, i.e., let $\mathcal{M_F}=\mathcal{L}_{(\textbf{w},k)}$.
%%%%%Original Characterization of M_F%%%%%
%%%%%%%%%%%%%%%%%%%%%%%%%%%%%%%%%%%%%%%%%%
%%%%%%%%%%%%%%%%%%%%%%%%%%%%%%%%%%%%%%%%%%

%%%%%New Characterization about M_F%%%%%%%
Let $G=(N, M_0, \mathcal{M_F})$ denote a \textit{plant} consisting of a marked net and a finite set of final markings $\mathcal{M_F}\subseteq R(N, M_0)$.
Normally, set $\mathcal{M_F}$ can be given by explicitly listing all its elements.
As more general forms, set $\mathcal{M_F}$ can be characterized by linear constraints, e.g., \textit{generalized mutual exclusion constraints} (GMECs)\cite{c25}.
%\footnote{\blue Note that set $\mathcal{M_F}$ can also be described as more general forms, e.g., \textit{generalized mutual exclusion constraints} (GMECs)\cite{c25}.}.}
A GMEC is a pair $(\textbf{w}, k)$, where $\textbf{w}\in \mathbb{Z}^m$ and $k\in \mathbb{Z}$ ($\mathbb{Z}$ is the set of integers), that defines a set of markings $\mathcal{L}_{(\textbf{w},k)}=\{M\in\mathbb{N}^m| \textbf{w}^T\cdot M \leq k\}.$
%{\red Note the finite set of arbitrary final markings can also be characterized by the union of $r$ GMECs $\bigcup\limits_{i\in\{1, 2, \ldots, r\}}\mathcal{L}_{(\textbf{w}_i,k_i)}$ where $r\in \mathbb{N}^{+}$, $\textbf{w}_i\in \mathbb{N}^m$ and $k_i\in \mathbb{N}$.}
Hereinafter, we adopt the GMEC-based representation to characterize $\mathcal{M_F}$ in $G$, i.e., let $\mathcal{M_F}=\mathcal{L}_{(\textbf{w},k)}$.

%$\mathcal{M_F}$ can be either given by explicitly listing all its members, or characterized by the union of several \textit{generalized mutual exclusion constraint} (GMEC)\cite{c25}.
%A GMEC is a pair $(\textbf{w}, k)$, {\red where $\textbf{w}\in \mathbb{Z}^m$ and $k\in \mathbb{Z}$ ($\mathbb{Z}$ is the set of integers)}, that defines a set of markings $\mathcal{L}_{(\textbf{w},k)}=\{M\in\mathbb{N}^m| \textbf{w}^T\cdot M \leq k\}.$

%%%%%New Characterization about M_F%%%%%%%

%
%\subsection{Generalized Mutual Exclusion Constraints (GMECs){\rm \cite{c25}}}

%\begin{definition}
%{\rm A GMEC is a pair $(\textbf{w}, k)$, where $\textbf{w}\in \mathbb{N}^m$ and $k\in \mathbb{N}$, that defines a set of markings
%
%\begin{center}
%$\mathcal{L}_{(\textbf{w},k)}=\{M\in\mathbb{N}^m| \textbf{w}^T\cdot M \leq k\}.$
%\end{center}$\hfill\square$}
%\end{definition}

\begin{definition}\label{NB}
{\rm A marking $M\in R(N, M_0)$ of a plant $G=(N, M_0, \mathcal{M_F})$ is said to be \emph{blocking} if no final marking is reachable from it, i.e., $R(N, M)\cap \mathcal{M_F} =\emptyset$; otherwise $M$ is said to be \textit{nonblocking}.
System $G$ is \textit{nonblocking} if no reachable marking is blocking; otherwise $G$ is \textit{blocking}.$\hfill\square$}
\end{definition}

\subsection{Basis Reachability Graph (BRG){\rm \cite{c3,c8,Basis}}}
%Consider an acyclic Petri net $N$ with incidence matrix $C$, then for all $M\in \mathbb{N}^{m}$ the following holds \cite{c8}:
%\begin{equation}
%\begin{aligned}
%M+C\cdot \textbf{y}\geq \textbf{0}\ (\textbf{y}\in \mathbb{N}^{n})&\Rightarrow\\
%\exists \sigma\in \varphi^{-1}(\textbf{y})&: M[\sigma \rangle \bar{M},\ \bar{M}=M+C\cdot \textbf{y}
%\end{aligned}
%\end{equation}

\begin{definition}
{\rm Given a Petri net $N = (P, T, Pre, Post)$, transition set $T$ can be partitioned into $T=T_E\cup T_I$, where the disjoint sets $T_E$ and $T_I$ are called the \textit{explicit} transition set and the \textit{implicit} transition set, respectively.
A pair $\pi=(T_E, T_I)$ is called a \textit{basis partition} of $T$ if the $T_I$-induced sub-net of $N$ is acyclic. We denote $|T_E|=n_E$ and $|T_I|=n_I$.
Let $C_I$ be the incidence matrix of the $T_I$-induced sub-net of $N$. $\hfill\square$}
\end{definition}

Note that in a BRG with respect to a basis partition $(T_E, T_I)$, the firing information of explicit transitions in $T_E$ is explicitly encoded in the BRG, while the firing information of implicit transitions in $T_I$ is abstracted as firing vectors. Note that the selection of $T_E$ and $T_I$ does not related to the physical meaning of the transitions: the only restriction is that the $T_I$-induced sub-net is acyclic.

\begin{definition}
{\rm Given a Petri net $N = (P, T, Pre, Post)$, a basis partition $\pi=(T_E, T_I)$, a marking $M$, and a transition $t\in T_E$, we define
\begin{center}
$\Sigma(M, t)=\{\sigma\in T_{I}^{\ast}| M[\sigma\rangle M^{\prime}, M^{\prime}\geq Pre (\cdot, t)\}$
\end{center}
as the set of \emph{explanations} of $t$ at $M$, and we define

\begin{center}
$Y(M, t)=\{\varphi(\sigma)\in \mathbb{N}^{n_I}| \sigma\in \Sigma(M, t)\}$
\end{center}
as the set of \emph{explanation vectors}; meanwhile we define
{$$\Sigma_{{\rm min}}(M, t)=\{\sigma\in \Sigma(M, t)| \nexists \sigma^{\prime}\in \Sigma(M, t): \varphi(\sigma^{\prime})\lneq \varphi(\sigma)\}$$}
as the set of \emph{minimal explanations} of $t$ at $M$, and we define

\begin{center}
$Y_{{\rm min}}(M, t)=\{\varphi(\sigma)\in \mathbb{N}^{n_{I}}| \sigma\in \Sigma_{{\rm min}}(M, t)\}$
\end{center}
as the corresponding set of \emph{minimal explanation vectors}.$\hfill\square$}
\end{definition}

\begin{definition}\label{DefX}
{\rm Given a marked net $(N, M_0)$ and a basis partition $\pi=(T_E, T_I)$, its \textit{basis marking set} $\mathcal{M_B}$ is the smallest subset of reachable markings such that:
\begin{itemize}
  \item $M_0\in \mathcal{M_B}$;
  \item If $M\in \mathcal{M_B}$, then for all $t\in T_E$, for all $\textbf{y}\in Y_{\rm min}(M, t)$, $M^{\prime}=M+C_I\cdot \textbf{y}+C(\cdot, t)\Rightarrow M^{\prime}\in \mathcal{M_B}.\hfill\square$
\end{itemize}}
\end{definition}
A marking $M$ in $\mathcal{M_B}$ is called a \textit{basis marking} of $(N, M_0)$ with respect to $\pi=(T_E, T_I)$.

%%%%%Original Definition 5: BRG
%\begin{definition}
%{\rm Given a bounded net $N=(P, T, Pre, Post)$ with an initial marking $M_0$ and a basis partition $\pi=(T_E, T_I)$, its \emph{basis reachability graph} is a non-deterministic finite state automaton $\mathcal{B}$ output by Algorithm 2 in \cite{c3}. The BRG $\mathcal{B}$ is a quadruple $(\mathcal{M_B}, {\rm Tr}, \Delta, M_0)$, where
%\begin{itemize}
%  \item the state set $\mathcal{M_B}$ is the set of basis markings;
%  \item the event set ${\rm Tr}$ is the set of pairs $(t, \textbf{y})\in T_E\times \mathbb{N}^{n_{I}}$;
%  \item the transition relation $\Delta=\{(M_1, (t, \textbf{y}), M_2)| t\in T_E, \textbf{y}\in Y_{\rm min}(M_1, t), M_2=M_1+C_I\cdot \textbf{y}+C(\cdot, t)\}$;
%  \item the initial state is the initial marking $M_0$.$\hfill\square$
%\end{itemize}}
%\end{definition}
%%%%%%%%%%%%%%%%%%%%%%%%%%%%%%%%

\begin{definition}
{\rm Given a bounded marked net $\langle N, M_0\rangle$ and a basis partition $\pi=(T_E, T_I)$, its \emph{basis reachability graph} is a deterministic finite state automaton $\mathcal{B}=(\mathcal{M_B}, {\rm Tr}, \Delta, M_0)$, where the state set $\mathcal{M_B}$ is the set of basis markings, the event set ${\rm Tr}$ is the finite set of pairs $(t, \textbf{y})\in T_E\times \mathbb{N}^{n_{I}}$, the transition relation $\Delta=\{(M_1, (t, \textbf{y}), M_2)\mid t\in T_E, \textbf{y}\in Y_{\rm min}(M_1, t), M_2=M_1+C_I\cdot \textbf{y}+C(\cdot, t)\}$ and the initial state is the initial marking $M_0$.$\hfill\square$}
\end{definition}

We extend in the usual way the definition of transition relation to consider a sequence of pairs $\sigma\in {\rm Tr}^*$ and write $(M_1,\sigma,M_2)\in\Delta$ to denote that from $M_1$ sequence $\sigma$ yields $M_2$.
%\end{lemma}
%\begin{proof}
%Since $N$ is acyclic, the state equation provides fully representation its marking reachability.
%Therefore if $M+C\cdot \textbf{y}\geq \textbf{0}$ we conclude that from $M$, among all firing sequences whose firing vector mapped from function $\varphi$ as $\textbf{y}$, there exists at least a firing sequence $\sigma\in \varphi^{-1}(\textbf{y})$ such that $M[\sigma \rangle \bar{M}$ where $\bar{M}=M+C\cdot \textbf{y}$.
%\end{proof}

\begin{definition}\label{DefIR}
{\rm Given a marked net $\langle N, M_0\rangle$, a basis partition $\pi=(T_E, T_I)$, and a basis marking $M_b\in \mathcal{M_B}$, we define
$R_I(M_b)=\{M\in \mathbb{N}^m\mid(\exists \sigma\in T_I^\ast)\; M_b[\sigma\rangle M\}$
as the \emph{implicit reach} of $M_b$.$\hfill\square$}
%, and we define
%\begin{center}
%$R_I^{\rm max}(M_b)=\{M\in \mathbb{N}^m|(\exists \sigma\in T_I^\ast) (\nexists \sigma^{'}\in T_I^\ast: \textbf{y}_{\sigma^{'}}\gneqq \textbf{y}_{\sigma})\; M_b[\sigma\rangle M\}$
%\end{center}
%as the \emph{maximal implicit reach} of $M_b$.
\end{definition}

The implicit reach of a basis marking $M_b$ is the set of all markings that can be reached from $M_b$ by firing only implicit transitions.
Since the $T_I$-induced sub-net is acyclic, by Proposition \ref{ProX}, it holds that:
\begin{center}
$R_I(M_b)=\{M\in\mathbb{N}^m\mid (\exists\textbf{y}_I\in\mathbb{N}^{n_I})\; M=M_b+C_I\cdot \textbf{y}_I\}.$
\end{center}

\section{BRG and Nonblockingness Verification}\label{sec2}
%The efficient verification of nonblockingness in Petri nets without an exhaustive enumeration of the state space remains an open issue.
%To attempt to discover a solution to the nonblockingness verification problem by using the BRG-based method, in \cite{Gu}, we first define the set of \textit{i-coreachable markings} and introduce the notion of \textit{unobstructiveness} of a BRG.

To efficiently solve the NB-V problem in Petri nets without constructing the reachability graph, we attempted to use the BRG-based approach in \cite{Gu}.
%{\red The work in \cite{Gu} discusses the {\red NB-V problem} by leveraging the BRG.
%{\blue If a plant is verified to be nonblocking, all nodes included in its corresponding BRG are nonblocking.}
%However, the converse may not be valid, i.e., {\blue the fact that all nodes in the BRG of a plant are nonblocking does not necessarily imply that the plant is nonblocking}. To help clarify, an example is provided in the following.}
However, as observed in \cite{Gu}, the classical BRG does not necessarily encode all information needed to test nonblockingness. To help clarify, an example is provided in the following.

\begin{example}\label{E1}
%{\rm Consider a parameterized plant $G=(N, M_0, \mathcal{M_F})$ in Fig. \ref{Fig1} with $M_0=[2\ 0\ 1]^{\rm T}$ and $\mathcal{M_F}=\{M_0\}$. In this net, $Pre(p_2, t_3)=\alpha$ is set to be a parameter ($\alpha\in \mathbb{N}$). Assuming {\blue $T_E = \{t_2\}$}, the BRG of this net (regardless of the value of $\alpha$) is also shown in the same figure, {\red where $M_{b0}=M_0, M_{b1}=[1\ 1\ 1]^{\rm T}, M_{b2}=[0\ 2\ 1]^{\rm T}, \textbf{y}_1=\textbf{y}_2=[0\ 0]^{\rm T}$ and $\textbf{y}_3=[1\ 0]^{\rm T}$.} The reachability graphs for $\alpha=1$ and $\alpha=2$ are shown in Fig. \ref{Fig2}.}
Consider a parameterized plant $G=(N, M_0, \mathcal{M_F})$ in Fig. \ref{Fig1} with $M_0=[2\ 0\ 1]^{\rm T}$ and $\mathcal{M_F}=\{M_0\}$. In this net, $Pre(p_2, t_3)=\alpha$ is set to be a parameter ($\alpha\in \mathbb{N}$). Assuming that $T_E = \{t_2\}$, the BRG of this net (regardless of the value of $\alpha$) is also shown in the same figure, where $M_{b0}=M_0$ and $Y_{\rm min}(M_{b0}, t_2) = \{\textbf{y}_1\}$.
It can be computed that $\textbf{y}_1=[1\ 0]^{\rm T}$, which implies that the firing of sequence $\sigma = t_1$ is the prerequisite (the minimal one) of the firing of explicit transition $t_2$ at marking $M_{b0}$. The reachability graphs for $\alpha=1$ and $\alpha=2$ are shown in Fig. \ref{Fig2}.

\begin{figure}[t]
\begin{center}
\begin{minipage}[t]{0.6\textwidth}
\includegraphics[width=4cm]{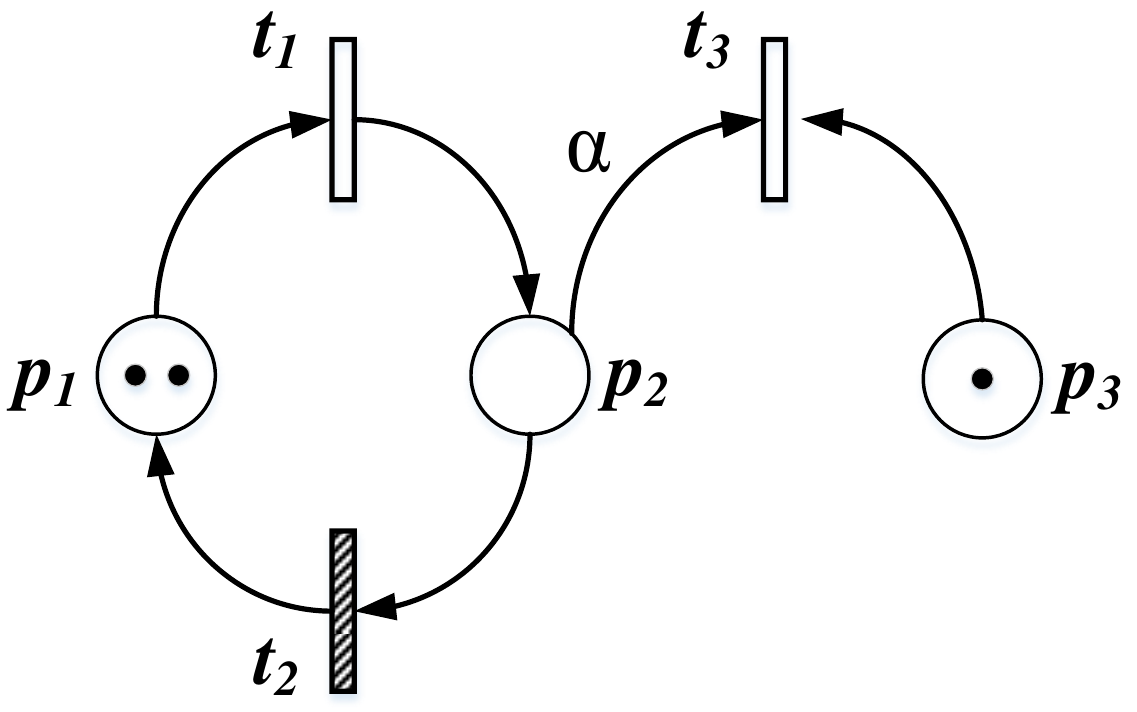}
\end{minipage}
\begin{minipage}[t]{0.2\textwidth}
\includegraphics[width=1.1cm]{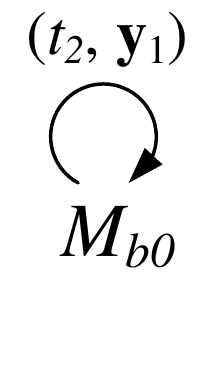}
\end{minipage}
\caption{A parameterized plant $G$ with $T_E=\{t_2\}$ marked with shadow (left) and its BRG $\mathcal{B}$ (right).}\label{Fig1}
\end{center}
\end{figure}
%\begin{figure}[H]
%\begin{center}
%\begin{minipage}[t]{0.6\textwidth}
%\includegraphics[width=4cm]{new1107-eps-converted-to.pdf}
%\end{minipage}
%\begin{minipage}[t]{0.2\textwidth}
%\includegraphics[width=2.3cm]{BRG_0830.pdf}
%\end{minipage}
%\caption{A parameterized plant $G$ with $T_E=\{t_1\}$ marked with shadow (left) and {\blue its BRG $\mathcal{B}$ (right).}}\label{Fig1}
%\end{center}
%\end{figure}

\begin{figure}[t]
\begin{center}
\begin{minipage}[t]{0.535\textwidth}
\includegraphics[width=4cm]{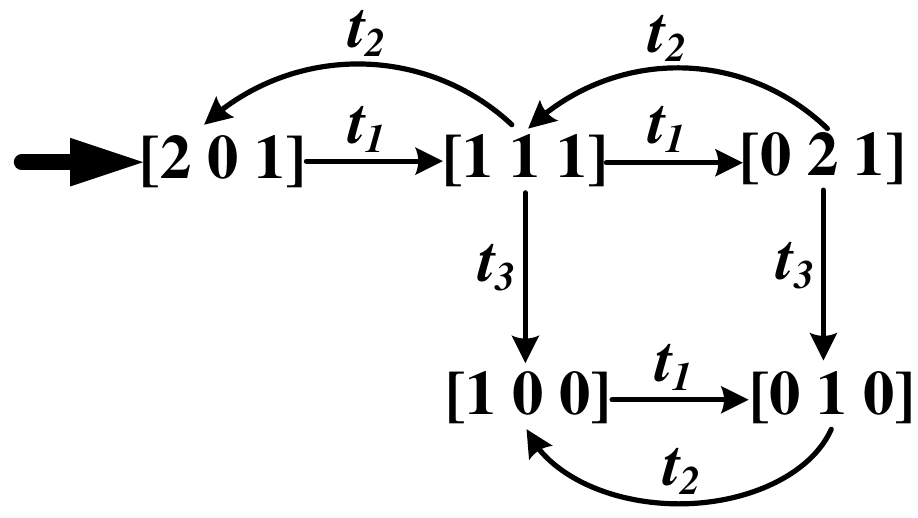}
\end{minipage}
\begin{minipage}[t]{0.44\textwidth}
\includegraphics[width=4cm]{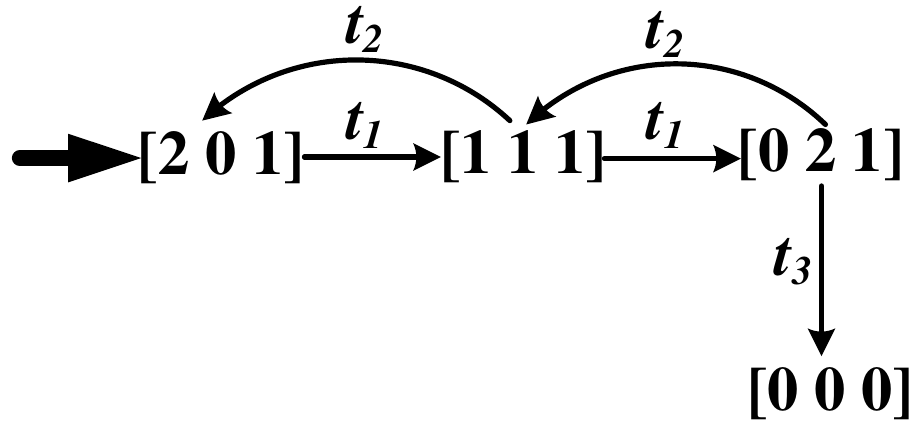}
\end{minipage}
\caption{Reachability graph of $G$ in Fig. \ref{Fig1} with $\alpha=1$ (left) and $\alpha=2$ (right).}\label{Fig2}
\end{center}
\end{figure}
\end{example}

%It can be concluded that {\blue the only basis marking $M_{b0}$ is nonblocking in the BRG in Figure \ref{Fig1} (right)}.
By inspection of the two reachability graphs, one can verify that $G$ is deadlock-free if $\alpha=1$ and not deadlock-free if $\alpha=2$.
When $\alpha=1$ $G$ is blocking due to the livelock composed by two markings $[1\ 0\ 0]^{\rm T}$ and $[0\ 1\ 0]^{\rm T}$.
When $\alpha=2$ $G$ is also blocking because of the non-final deadlock $[0\ 0\ 0]^{\rm T}$.
However, these blocking conditions are not captured in the BRG which, in both cases, consists of a unique node $M_{b0} = M_0$ which is also final.$\hfill\square$

Example \ref{E1} shows that when all basis markings in the BRG are nonblocking, this does not necessarily imply that all reachable markings in the corresponding plant are nonblocking.
%We attempted to discover a solution of this problem in {\red [CDC paper]} by first defining a new property of the BRG, i.e., \emph{unobstructiveness}, and show that a Petri net is blocking if its BRG is obstructed. However, the converse is not true in that the possible presences of \textit{livelocks} or \textit{deadlocks} among non-basis markings:
Specifically, as we mentioned in Section \ref{Section1}, two types of blocking markings should be analyzed, i.e., those are dead but non-final, and those are included in livelocks (ergodic strongly-connected components of the reachability graph containing non-final non-dead markings).
%%%%%%%%%%Original Description%%%%%%%%%%%%
%\begin{enumerate}
%  \item [$-$] dead but non-final;
%  \item [$-$] livelocks, i.e., ergodic strongly-connected components of non-dead markings.
%%  Once a marking in a livelock is reached the future evolution will remain within this component.
%\end{enumerate}
%%%%%%%%%%Original Description%%%%%%%%%%%%

%Notice that the occurrence of such livelock and deadlock problems stems from the abstraction of information inherent in the basis marking approach, and the unobstructiveness of a BRG may not completely characterize the nonblockingness of the Petri net.
Notice that when tackling the NB-V problem by using the basis marking approach, there may exist some (i) dead and non-final markings, and/or (ii) livelock markings that are not basis. Since such markings do not belong to set $\mathcal{M_B}$, they are not shown in the corresponding BRG.
Therefore, the classical structure of BRGs needs to be revised to encode additional information for checking nonblockingness.
To this end, in the following, we propose a structure namely minimax-BRG and show how it can be leveraged on solving the NB-V problem.

\section{Minimax Basis Markings and Minimax-BRGs}\label{MinimaxBRG}
%\section{Verifying Nonblockingness of Deadlock-Free Petri Nets Using Minimax-BRGs}\label{NewSection}

%{\red In this section, we introduce the minimax-BRG and characterize its properties relevant for NB-V.}
%A novel structure namely the \textit{minimax-basis reachability graph} (minimax-BRG) is defined, and its properties are studied.
%Moreover, we extend the notions of i-coreachable marking set and unobstructiveness \cite{Gu} with regard to the minimax-BRG.
%Using the minimax-BRG instead of the reachability graph alleviates the state explosion conundrum compared with the exhaustive enumeration of the state space.
%To construct a minimax-BRG is often more efficient than that of a corresponding expanded BRG, while both of these two semi-structure-based methodologies can resolve the potential livelock problems with respect to nonblockingness verification.

\subsection{Minimax Basis Markings}\label{SRGchapter}
%\subsection{Maximal Explanations and Minimax Basis Markings}\label{SRGchapter}
%In order to solve the nonblockingness verification problem, we proposed an algorithm to obtain a structure which we call the minimax BRG that contains not only all basis markings but also some considered non-basis markings.
%Although the expanded BRG can be used on solving the livelock problem, its efficiency needs to be further improved.
%Since all markings that reached by firing a feasible transition sequence that end with an explicit one are exhaustively collected, which will inevitably lead to a redundancy, i.e., not all of them are related to the possibly livelock components.
%Similar with BRG, it considerably improves the computational efficiency comparing with the expanded BRG. In detail, less additional non-basis markings and transition relations are computed so that the redundancy can be alleviated significantly.
To define the minimax-BRG, we first introduce the set of \textit{minimax basis markings}.
As two prerequisite concepts, we define \textit{maximal explanations} and \textit{maximal explanation vectors} as follows.

\begin{definition}\label{DEFMAX}
{\rm Given a Petri net $N = (P, T, Pre, Post)$, a basis partition $\pi=(T_E, T_I)$, a marking $M$, and a transition $t\in T_E$, we define
%\begin{center}
%$\Sigma(M, t)=\{\sigma\in T_{I}^{\ast}| M[\sigma\rangle M^{\prime}, M^{\prime}\geq Pre (\cdot, t)\}$
%\end{center}
%as the set of \emph{explanations} of $t$ at $M$, and
%\begin{center}
%$Y(M, t)=\{\varphi(\sigma)\in \mathbb{N}^{n_I}| \sigma\in \Sigma(M, t)\}$
%\end{center}
%as the set of \emph{explanation vectors}; meanwhile we define
\begin{center}
$\Sigma_{{\rm max}}(M, t)=\{\sigma\in \Sigma(M, t)| \nexists \sigma^{\prime}\in \Sigma(M, t): \varphi(\sigma^{\prime})\gneq \varphi(\sigma)\}$
\end{center}
as the set of maximal explanations of $t$ at $M$, and

\begin{center}
$Y_{{\rm max}}(M, t)=\{\varphi(\sigma)\in \mathbb{N}^{n_{I}}| \sigma\in \Sigma_{{\rm max}}(M, t)\}$
\end{center}
as the corresponding set of maximal explanation vectors.$\hfill\square$}
\end{definition}

%\begin{remark}
%Given a marked net $\langle N, M_0\rangle$ with $\pi=(T_E, T_I)$, for all $M\in R(N, M_0)$ and for all $t\in T_E$, $Y_{\rm max}(M, t)$ is the set of maximal elements in poset $Y(M, t)$.$\hfill\square$
%\end{remark}
From the standpoint of \textit{partial order set} (poset), the set of maximal explanation vectors $Y_{\rm max}(M,t)$ is the set of \textit{maximal elements} in the corresponding poset $Y(M,t)$.
Note that, as is the case for the set of minimal explanation vectors $Y_{{\rm min}}(M, t)$\cite{c3,c8,Basis}, $Y_{{\rm max}}(M, t)$ may not be a singleton.
In fact, there may exist multiple maximal firing sequences $\sigma_I\in T_I^{*}$ that enable an explicit transition $t$.
Next, we define \textit{minimax basis markings} in an iterative way as follows.
\begin{definition}\label{Def1}
{\rm Given a marked net $\langle N, M_0\rangle$ with a basis partition $\pi=(T_E, T_I)$, its minimax basis marking set $\mathcal{M_{B_M}}$ is recursively defined as follows
\begin{enumerate}
  \item [(a)] $M_0\in \mathcal{M_{B_M}}$;
  \item [(b)] $M\in \mathcal{M_{B_M}}$, $t\in T_E, \textbf{y}\in Y_{\rm min}(M, t)\cup Y_{\rm max}(M, t)$, $M^{\prime}=M+C_I\cdot \textbf{y}+C(\cdot, t)\Rightarrow M^{\prime}\in\mathcal{M_{B_M}}$.
%  \item If $M\in \mathcal{M_{B_M}}$, $\forall t\in T_E, \forall \textbf{y}^{\prime}\in Y_{\rm max}(M, t)$, $(M^{\prime\prime}=M+C_I\cdot \textbf{y}^{\prime}+C(\cdot, t))\Rightarrow M^{\prime\prime}\in\mathcal{M_{B_M}}$.
\end{enumerate}
A marking in $\mathcal{M_{B_M}}$ is called a minimax basis marking of the marked net with $\pi=(T_E, T_I)$.$\hfill\square$}
\end{definition}

In practice, the set of minimax basis markings is a smaller subset of reachable markings that contains the initial marking and is closed by reachability through a sequence that contains an explicit transition and one of its maximal or minimal explanations.
Meanwhile, note that for a bounded marked net, $\mathcal{M_B}\subseteq\mathcal{M_{B_M}}$ holds.
To compute $Y_{\rm min}(M, t)$, one may refer to Algorithm 1 in \cite{c3}.
We introduce in Algorithm \ref{AlgoMax} how to calculate $Y_{\rm max}(M, t)$ for a given marking $M$ and an explicit transition $t$. The basic idea is first to iteratively enumerate all explanation vectors in $Y(M, t)$ (not necessarily stored), and then collect the set of maximal elements in $Y(M, t)$.

\begin{algorithm}[t]
\caption{Calculation of $Y_{\rm max}(M,t)$} %算法的名字
\begin{algorithmic}[1]
\REQUIRE A marked net $\langle N, M_0\rangle$, a basis partition $\pi= (T_E, T_I )$, a marking $M\in R(N, M_0)$, and $t\in T_E$
\ENSURE $Y_{\rm max}(M,t)$
\STATE {$\Gamma:=\left[\begin{array}{c|c}
                       C_I^{\rm T} & I_{n_{I}\times n_{I}} \\ \hline
                       A & B\\\end{array}\right]$
where $A:=(M-Pre(\cdot, t))^{\rm T}$ and $B:=\textbf{0}_{n_I}^{\rm T}$};
\STATE {Subroutine: update $\Gamma$ through lines 2$\--$12 in Algorithm 1 of \cite{c3};}
%%%%%%%%%%%%%%%%%%%%%%%%%%%%%以下为新添加的代码,用于计算Y(M, t)以及Ymax(M, t)%%%%%%%%%%%%%%%%%%
\STATE {$\alpha:={\rm row\_size}(\Gamma)$, $\alpha_{\rm old}:=0$}, and $\alpha^{\prime}_{\rm old}:=n_I$;
\WHILE {$\alpha_{\rm old}-\alpha\neq0$}
\STATE {$\alpha_{\rm old}:=\alpha$};
\FOR{$k=1:n_{I}$,}  % If 语句，需要和EndIf对应
\FOR {$l=(\alpha^{\prime}_{\rm old}+1): \alpha_{\rm old}$,}
\STATE {$R:=[\Gamma(l, \cdot)+\Gamma(k, \cdot)]$;}
\IF {$R\geq \textbf{0}$ and $\nexists \Gamma(i, \cdot) = R$ where $i\in\{(n_I+1), \cdots, \alpha_{\rm old}\}$,}
\STATE {$\Gamma_{\rm new}:= \left[
           \begin{array}{cc}
           \Gamma\\ \hline R
           \end{array}
           \right];$}
\ENDIF
\ENDFOR
\ENDFOR
\STATE {$\alpha:={\rm row\_size}(\Gamma_{\rm new})$, $\alpha^{\prime}_{\rm old}:= \alpha_{\rm old}$, and $\Gamma:=\Gamma_{\rm new}$};
\ENDWHILE
%\STATE {\blue Delete $\Gamma(n_I+1, \cdot)$ if $\Gamma(n_I+1, \cdot)\ngeq \textbf{0}$;}
\STATE Let $Y(M, t)$ be the set of row vectors in the updated sub-matrix $B=\Gamma((n_I+1): \alpha, (m+1): (m+n_I))$;
\STATE Let $Y_{\rm max}(M,t)$ be the set of maximal elements in $Y(M, t)$.
\end{algorithmic}\label{AlgoMax}
\end{algorithm}
The computation as to $Y(M, t)$ is presented through lines 1$\--$16.
To put it simply, as a \textit{breadth-first-search} technique, all possible firing vectors $\textbf{y}\in \mathbb{N}^{n_I}$ such that $\sigma\in\varphi^{-1}(\textbf{y})$ is an explanation of $t$ at $M$ (i.e., $M[\sigma\rangle M^{\prime}[t\rangle$) are iteratively searched and enumerated. A detailed description is shown as follows.
%Note that if $Y(M, t)\neq \emptyset$, the \textit{subroutine} in line 2 (lines 2$\--$12 in Algorithm 1 in \cite{c3}) delivers a matrix

Initially, at line 1, the row $A =(M-Pre(\cdot, t))^{\rm T}$ is either nonnegative or contains at least a negative element.
The former implies that $t$ is sufficiently enabled at $M$ (thus $\textbf{0}_{n_I}\in Y(M, t)$). The latter suggests that the number of tokens in the corresponding place(s) as to $M$ is insufficient.
Then, at line 2, we call a subroutine that consists of lines 2$\--$12 in Algorithm 1 of \cite{c3} to process the matrix $\Gamma$.
Precisely, this procedure enumerates part of the explanation vectors (not all) by iteratively updating $\Gamma$, i.e., adding some considered rows in $\left[
           \begin{array}{c|c}
           C_I^{\rm T} & I_{n_I\times n_I}\\
           \end{array}
           \right]$
to rows in
$\left[
           \begin{array}{c|c}
           A & B\\
           \end{array}
           \right]$ that contain negative elements to neutralize them eventually.
%The \textbf{physical meaning} of this manipulation is to test if the firing of some implicit transitions $t_{1^*}, t_{2^*}, \ldots, t_{k^*}\in T_I$ ($t_{i^*}$ corresponds to the $i^*$th$\ (i\in\{1,2,\ldots, k\})$ row of the matrix $C^T_I$) is the prerequisite of firing the explicit transition $t\in T_E$ at $M$.
The physical interpretation of this manipulation is to test whether the firing of an implicit transition $t_{i^*}\in T_I$ ($t_{i^*}$ corresponds to the $i^*$th row of the matrix $C^T_I$) at marking $M$ can replenish the corresponding token counts in $M(p)$ such that $M^{\prime}(p)-Pre(p, t)\geq \textbf{0}$ where $M[t_{i^*}\rangle M^{\prime}$.
As a result, if $Y(M, t)\neq \emptyset$, the sub-matrix $\left[
           \begin{array}{c|c}
           A & B\\
           \end{array}
           \right]$
contains all nonnegative rows and the corresponding explanation vectors are stored individually in the form of row vectors in sub-matrix $B$.
%which implies that $M + C_I\cdot (\sum^k _{i = 1}I_{n_I\times n_I}(i^*, \cdot))^T - Pre(\cdot, t)\geq \textbf{0}$

To complete $Y(M, t)$, analogously, from lines 6$\--$12, we add each of the rows in $\left[
           \begin{array}{c|c}
           C_I^{\rm T} & I_{n_I\times n_I}\\
           \end{array}
           \right]$
to rows in
$\left[
           \begin{array}{c|c}
           A & B\\
           \end{array}
           \right]$ in the updated $\Gamma$.
%{\blue The \textbf{physical interpretation} of this manipulation is to test whether each implicit transition $t_{i^*}\in T_I$ ($t_{i^*}$ corresponds to the $i^*$th row of the matrix $C^T_I$) at a marking $M$ with $M[\sigma t\rangle$ ($\sigma\in Y(M, t)$) can also validate $M[\sigma t_{i^*}t\rangle$: if so, a new explanation $\sigma t_{i^*}$ will be derived.}
If the obtained new row, e.g., $R = [C_I^{\rm T}(i^*, \cdot) + A(j^*, \cdot) \mid I_{n_I\times n_I}(i^*, \cdot) + B(j^*, \cdot)]$, is nonnegative and does not equal to any of the rows in $\left[
           \begin{array}{c|c}
           A & B\\
           \end{array}
           \right]$, it is then recorded in $\left[
           \begin{array}{c|c}
           A & B\\
           \end{array}
           \right]$ as a new extended row and matrix $\Gamma$ will be updated.
This act implies that $M + C_I\cdot (I_{n_I\times n_I}(i^*, \cdot) + B(j^*, \cdot))^T - Pre(\cdot, t)\geq \textbf{0}$. Thus, it is deduced that the vector $(I_{n_I\times n_I}(i^*, \cdot) + B(j^*, \cdot))^T$ is another explanation vector of $t$ at $M$ and it will be recorded in the sub-matrix $B$.
     %and $\Gamma$ will be updated, {\blue \textbf{which implies that $M + C_I\cdot (B(i^*, \cdot))^T - Pre(\cdot, t)\geq \textbf{0}$ and the vector $(B(i^*, \cdot))^T$ induced by $R$ is unique.}}
%{\blue In fact, each vector $(B(i^*, \cdot))^T$ mentioned above encodes an explanation vector information, since $(R(1:m))^T = M + C_I \cdot (R(m+1:m+n_I))^T - Pre(\cdot, t)\geq \textbf{0}$ implies that $(R(m+1:m+n_I))^T$ is an explanation vector of $t$ at $M$.
%{\blue In fact, the vector $(B+I_{n_I\times n_I}(i^*, \cdot))^T$ mentioned above induced by $R$ is an explanation vector of $t$ at $M$.}

Iteratively, represented in the sub-matrix $B$ of the updated matrix $\Gamma_{\rm new}$, all explanations of $M$ at $t$ can be collected.
The computation of $Y(M, t)$ ends when the sub-matrix $\left[
           \begin{array}{c|c}
           A & B\\
           \end{array}
           \right]$ of $\Gamma_{\rm new}$ reaches a fixed point. Note that due to the boundness of the net and the acyclicity of the $T_I$-induced sub-net, Algorithm \ref{AlgoMax} will not run endlessly, since $Y(M, t)$ is not infinite.
%$Y(M,t)$ is represented as the set of row vectors in the updated sub-matrix $B$ of $\Gamma_{\rm new}$.
Finally, at line 17, the set of maximal explanations is obtained by collecting all the maximal rows in $Y(M, t)$.
%
%In fact, a new explanation vector $\textbf{y}^{\prime}\in \mathbb{N}^{n_I}$ of $t$ at $M$ can be collected based on $R$, since there exists a firing sequence $\sigma^{\prime}\in\varphi^{-1}(\textbf{y}^{\prime})$ such that $M[\sigma^{\prime}\rangle M^{\prime\prime}[t\rangle$.

%Iteratively, represented in the sub-matrix $B$ of the updated matrix $\Gamma_{\rm new}$, all explanations of $M$ at $t$ can be computed.
%Stage 2 ends when $\alpha$ equals to $\alpha_{\rm old}$, meaning that sub-matrix $\left[
%           \begin{array}{c|c}
%           A & B\\
%           \end{array}
%           \right]$ reaches a fixed point.
%Finally, the set of all explanation vectors in $Y(M, t)$ can be generated.
%By removing each of the rows in $B$ that can be covered by another row, the set of maximal explanation vectors $Y_{\rm max}(M, t)$ can be established.

%{\blue \textbf We analyze the complexity of Algorithm \ref{AlgoMax}. At stage 1, the complexity for each of the iteration in the \textit{while} loop is $\mathcal{O}(|P|\cdot n_I)$. Since the marked net is bounded and $T_I$-induced sub-net is acyclic, the \textit{while} loop shall be terminated within a finite number of times. However, since different systems may have different structures, the above iteration counts cannot be expressed as a simple function of the system structure. Therefore, the complexity of Algorithm 1 cannot be quantified generally.}

\begin{remark}
We analyze the complexity of Algorithm \ref{AlgoMax}. The complexity of the \textit{while} loop (lines 3$\--$14) can be estimated as follows.
%we have $|Y(M, t)|$ iterations from lines 3 and 6, and 8 (part of line 8 that determines if $\nexists \Gamma(i, \cdot) = R$); $n_I$ iterations from lines 5 and 8 (part of line 8 that tests $R\neq Γ(i, ・)$ is linear in $n_I+nI = 2n_I$). Thus, the worst-case time complexity of this loop is $\mathcal{O}(|Y(M, t)|^3n^2_I)$.
we have $|Y(M, t)|$ iterations from lines 3, 6 and 8 (determine if $R$ is unique) and $n_I$ iterations from lines 5. Thus, the worst-case time complexity of this loop is $\mathcal{O}(|Y(M, t)|^3n_I)$.
On the other hand, the complexity of line 17 is less than the above-mentioned loop.
Thus, the overall complexity is cubic in $|Y(M, t)|$.$\hfill\square$
\end{remark}

%{\red We analyze the complexity of Algorithm \ref{AlgoMax}. For stage 1, the complexity is $\mathcal{O}(|P|\cdot |T_I|)$; for stage 2, the complexity is $\mathcal{O}(|T_I|^3)$. Hence, the overall complexity of Algorithm 1 is $\mathcal{O}(|P|\cdot |T_I|+|T_I|^3)$, which is {\blue polynomial in the size of the net.}}

%An example illustrating how this algorithm works is provided in the Appendix.

%\begin{remark}\label{RemX}
%{\rm The set of minimax basis marking $\mathcal{M_{B_M}}$ is a superset of the set of basis markings $\mathcal{M_B}$ defined in Definition \ref{DefX}, i.e., $\mathcal{M_{B_M}}\supseteq\mathcal{M_B}$.
%In fact, $\mathcal{M_B}$ can be recursively computed as in Definition \ref{Def1} but assuming that in condition (b) $\textbf{y}\in Y_{\rm min}(M, t)$ holds, i.e., only minimal explanations are considered.}
%\end{remark}

%Note that if $|Y_{\rm max}(M, t)|=1$ then it will be the greatest element that is the unique maximal element in $Y(M, t)$. The condition of $Y_{\rm max}(M, t)$ being a singleton is discussed in Section \ref{Singleton}.

\subsection{Minimax Basis Reachability Graph}
%{\blue The minimax-BRG is defined in the following.}
\begin{definition}\label{Def2}
Given a bounded marked net $\langle N, M_0\rangle$ and a basis partition $\pi=(T_E, T_I)$, its minimax-BRG is a deterministic finite state automaton $\mathcal{B_M}=(\mathcal{M_{B_M}}, {\rm Tr_{\mathcal{M}}}, \Delta_{\mathcal{M}}, M_0)$, where $\mathcal{M_{B_M}}$ is the set of minimax basis markings, ${\rm Tr}_\mathcal{M}$ is a finite set of pairs $(t, \textbf{y})\in T_E\times \mathbb{N}^{n_{I}}$, $\Delta_{\mathcal{M}}$ is the transition relation $\{(M_1, (t, \textbf{y}), M_2)\mid t\in T_E; \textbf{y}\in (Y_{\rm min}(M_1, t)\cup Y_{\rm max}(M_1, t)), M_2=M_1+C_I\cdot \textbf{y}+C(\cdot, t)\}$ and $M_0$ is the initial marking.$\hfill\square$
\end{definition}

We extend the definition of transition relation $\Delta_{\mathcal{M}}$ for sequences of pairs $\sigma^{+}=(t_1, \textbf{y}_1), (t_2, \textbf{y}_2), \cdots, (t_k, \textbf{y}_k)\in {\rm Tr}_\mathcal{M}^*$ and write $(M_1, \sigma^{+}, M_2)\in\Delta_{\mathcal{M}}$ to denote that from $M_1$ sequence $\sigma^{+}$ yields $M_2$ in $\mathcal{B_M}$.

According to Definitions \ref{Def1} and \ref{Def2}, to build a minimax-BRG, one may refer to the construction procedure of a BRG (e.g., see Algorithm 2 in \cite{c3}).
%The difference is that the construction of minimax-BRGs requires to take not only all minimal explanation vectors but also all maximal ones into consideration.
The difference is that the construction of minimax-BRGs requires taking both minimal and maximal explanation vectors into consideration.
We briefly explain the construction procedure as follows.
First, the set $\mathcal{M_{B_M}}$ is initialized as $\{M_0\}$.
Then, for all untested markings $M\in \mathcal{M_{B_M}}$ and for all explicit transitions $t\in T_E$, it is required to check whether there exist explanation vectors $\textbf{y}\in Y_{\rm min}(M,t)\cup Y_{\rm max}(M,t)$: if exist, the corresponding minimax basis marking (i.e., $M^{\prime}=M+C_I\cdot \textbf{y}+C(\cdot, t)$) is computed and stored in $\mathcal{M_{B_M}}$ (on the condition that $M^{\prime}$ is not included in the set $\mathcal{M_{B_M}}$ before).
Moreover, the set of pairs $(t,\textbf{y})$ and transition relations between $M$ and $M^{\prime}$ are stored in ${\rm Tr_{\mathcal{M}}}$ and $\Delta_\mathcal{M}$, respectively.
Iteratively, the minimax-BRG $\mathcal{B_M}$ can be constructed.
%Algorithm \ref{Algo3} stops when there is no unchecked marking in $\mathcal{M_{B_M}}$.
We exemplify this procedure in Example \ref{EXPSRG}.

%Algorithm \ref{Algo3} computes a minimax-BRG.
%The set $\mathcal{M_{B_M}}$ is initialized at $\{M_0\}$.
%At the end of the procedure, it contains the set of minimax basis markings.
%For all untested markings $M\in \mathcal{M_{B_M}}$, i.e., those with no tag, and for all explicit transitions $t\in T_E$, we check whether there exist explanation vectors $\textbf{y}\in Y_{\rm min}(M,t)$ or $\textbf{y}\in Y_{\rm max}(M,t)$.
%If such explanation vectors exist, we compute all minimax basis markings (i.e., $M^{\prime}=M+C_I\cdot \textbf{y}+C(\cdot, t)$) and store them in $\mathcal{M_{B_M}}$.
%Moreover, the set of pairs $(t,\textbf{y})$ and transition relations between $M$ and $M^{\prime}$ are stored in ${\rm Tr_{\mathcal{M}}}$ and $\Delta_\mathcal{M}$, respectively.
%
%Algorithm \ref{Algo3} stops when there is no unchecked marking in $\mathcal{M_{B_M}}$.
%Comparing with the construction of the BRG, where one needs to compute the minimal explanation vectors\cite{c8,Basis}, Algorithm \ref{Algo3} requires to compute all markings that are reachable from the initial marking by firing not only all minimal explanation vectors but also all maximal ones.
As for the complexity of constructing the minimax-BRG, in common with the BRG, the upper bound of states in a minimax-BRG is the size of
the reachability space of a net (consider $T_E=T$ and $T_I=\emptyset$).
Nonetheless, first, the building of a minimax-BRG does not require constructing the reachability graph.
%Then, we will show in the following sections that, when tackling the NB-V problem, the minimax-BRG-based approach is general and can be directly applied to arbitrary bounded plants (the only restriction is that the $T_I$-induced sub-net is acyclic). This is a major practical advantage with respect to other abstraction approaches that are based on particular structures or symmetries, and require significant analysis of the model in a preliminary stage before they can be applied.
%Further, our numerical results (e.g., see Section \ref{sec2.3} and \cite{GitHub}) show that the minimax-BRG-based technique achieves efficiency in some considered cases.
Then, our numerical results (e.g., see Section \ref{sec2.3} and \cite{GitHub}) show that the minimax-BRG can often be more compact in size than that of the reachability graph in the considered cases.
%Accordingly, when it comes to a related problem of NB-V, i.e., \textit{nonblocking enforcement}, which consists of designing a \textit{supervisor} to ensure that the controlled plant does not reach a blocking marking, a supervisor designed based on the minimax-BRG can also be more compact than that of a reachability-graph-based one.
%Note that in this work we only address the NB-V problem. A related problem, which we do not consider here, is that of \textit{nonblockingness enforcing}, which consists in designing a \textit{supervisor} to ensure that the controlled plant does not reach a blocking state.

\begin{remark}
Similar to the BRG, note that the selection of the basis partition may change the computational efficiency of constructing the minimax-BRG.
In general, the larger is the set $T_I$ of implicit transitions in a basis partition $\pi=(T_E, T_I)$, the smaller is the number of nodes in the minimax-BRG and the time required for its construction. See Section \uppercase\expandafter{\romannumeral4} in \cite{c3} for a discussion on how to choose basis partitions.$\hfill\square$
%In general, for a basis partition $\pi=(T_E, T_I)$,  tends to deliver a minimax-BRG with fewer nodes and cost less time.
%Such selection problem has been considered and studied in the standard BRG-based approach \cite{c3}, i.e., given a Petri net $N=(P, T, Pre, Post)$, the complexity of obtaining a basis partition that contains a maximal number of the implicit transitions is $\mathcal{O}(|P|\cdot|T|^2)$.
%Although we focus on the minimax-BRG in this paper, the rationale is the same.
\end{remark}

\begin{example}\label{EXPSRG}
Consider again the parameterized plant $G=(N, M_0, \mathcal{M_F})$ in Fig. \ref{Fig1} (left) with $\alpha=1$ and $T_E=\{t_2\}$.
We briefly introduce how to construct its minimax-BRG $\mathcal{B_M}$.
According to Definition \ref{Def1}, $M_{b0} = M_0$ is a minimax basis marking.
Next, we compute the minimal and maximal explanation vectors of the explicit transition $t_2$ at $M_{b0}$ respectively and derive the other potential minimax basis markings. For instance, for $t_2$, the only minimal explanation vector $\textbf{y}_{\rm min} = \textbf{y}_1 = [1\ 0]^T$ while the only maximal explanation vector $\textbf{y}_{\rm max} = \textbf{y}_2 = [2\ 1]^T$. Since $M_{b0}+C_I\cdot \textbf{y}_{\rm min} + C (\cdot, t_2) = [2\ 0\ 1]^T = M_{b0}$, no new minimax basis marking is generated. However, the pair $(t_2, \textbf{y}_1)$ is stored in ${\rm Tr}_\mathcal{M}$ while the transition relation $(M_{b0}, (t_2, \textbf{y}_1), M_{b0})$ is stored in $\Delta_\mathcal{M}$.
On the other hand, since $M_{b1} = M_{b0}+C_I\cdot \textbf{y}_{\rm max} + C (\cdot, t_2) = [1\ 0\ 0]^T \neq M_{b0}$, let $M_{b1}$ be another minimax basis marking. Corresponding pair and transition relation are also collected.
Analogously, $\textbf{y}_3 = [1\ 0]^T$ and $\mathcal{B_M}$ can be constructed which is graphically shown in Fig. \ref{MBRG}.$\hfill\square$

\begin{figure}[t]
\begin{center}
\includegraphics[width=2.8cm]{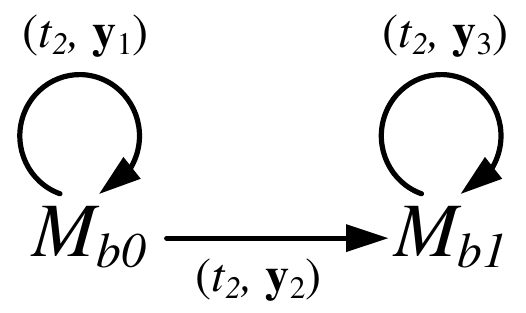}
\caption{The minimax-BRG $\mathcal{B_M}$ with $T_E=\{t_2\}$.}\label{MBRG}
\end{center}
 \end{figure}
\end{example}

%For the sake of space, the expanded BRG $\mathcal{B_E}$ of $G$ is not depicted, which contains 10 expanded basis markings.
%Next, we demonstrate the properties of a minimax-BRG and prove the validity of applying it on verification nonblockingness of a deadlock-free system.
%\subsection{Properties of minimax-BRG}
In the following, we show that the minimax-BRG preserves the reachability information and other non-minimax-basis markings can be algebraically characterized by linear equations.

\begin{proposition}\label{SRGnew}
{\rm Given a marked net $\langle N, M_0\rangle$ with a basis partition $\pi=(T_E, T_I)$ and a marking $M\in\mathbb{N}^m$, $M\in R(N, M_0)$ if and only if
there exists a minimax basis marking $M_b\in \mathcal{M_{B_M}}$ such that $M\in R_I(M_b)$, where $\mathcal{M_{B_M}}$ is the set of the minimax basis markings in minimax-BRG of $\langle N, M_0\rangle$.}
\end{proposition}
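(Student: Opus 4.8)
The plan is to prove the two implications separately, following the pattern of the reachability characterization already known for the classical BRG. It is worth noting up front that for this statement only the \emph{minimal} explanation vectors of $\mathcal{M_{B_M}}$ are actually needed; the maximal ones are harmless extra nodes and play no role here, so the argument is essentially the one for the standard BRG.

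For the ``if'' direction, it suffices to show $\mathcal{M_{B_M}}\subseteq R(N,M_0)$, since then $M\in R_I(M_b)\subseteq R(N,M_b)\subseteq R(N,M_0)$. The inclusion $\mathcal{M_{B_M}}\subseteq R(N,M_0)$ follows by induction along the recursive construction of Definition \ref{Def1}: $M_0\in R(N,M_0)$, and whenever $M\in\mathcal{M_{B_M}}\cap R(N,M_0)$ is replaced by $M'=M+C_I\textbf{y}+C(\cdot,t)$ with $\textbf{y}\in Y_{\rm min}(M,t)\cup Y_{\rm max}(M,t)\subseteq Y(M,t)$, there is $\sigma\in\varphi^{-1}(\textbf{y})\cap T_I^\ast$ with $M[\sigma\rangle(M+C_I\textbf{y})\geq Pre(\cdot,t)$, hence $M[\sigma t\rangle M'$ and $M'\in R(N,M_0)$.

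For the ``only if'' direction, take $M\in R(N,M_0)$, fix a firing sequence $M_0[\sigma\rangle M$, and decompose it with respect to the basis partition as $\sigma=\sigma_{I,0}\,t_1\,\sigma_{I,1}\,t_2\cdots t_k\,\sigma_{I,k}$ with $t_i\in T_E$ and $\sigma_{I,j}\in T_I^\ast$; I argue by induction on $k$. If $k=0$ then $\sigma\in T_I^\ast$, so $M\in R_I(M_0)$ with $M_0\in\mathcal{M_{B_M}}$. For the inductive step, let $M'$ be the marking reached after the prefix $\sigma_{I,0}t_1\cdots t_{k-1}\sigma_{I,k-1}$; by the induction hypothesis there are $M_b'\in\mathcal{M_{B_M}}$ and $\tau\in T_I^\ast$ with $M_b'[\tau\rangle M'$, hence $M_b'[\tau\,t_k\,\sigma_{I,k}\rangle M$. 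Since $M'\geq Pre(\cdot,t_k)$, we have $\varphi(\tau)\in Y(M_b',t_k)$; because the componentwise order on $\mathbb{N}^{n_I}$ is well-founded, the finite nonempty set $\{\textbf{y}\in Y(M_b',t_k):\textbf{y}\leq\varphi(\tau)\}$ has a minimal element $\textbf{y}_{\rm min}$, which is then minimal in all of $Y(M_b',t_k)$, i.e. $\textbf{y}_{\rm min}\in Y_{\rm min}(M_b',t_k)$. Put $M_b=M_b'+C_I\textbf{y}_{\rm min}+C(\cdot,t_k)$: by Definition \ref{Def1}(b) we get $M_b\in\mathcal{M_{B_M}}$, and $M_b\geq\textbf{0}$ since $\textbf{y}_{\rm min}$ is an explanation vector. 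Finally $M=M_b+C_I\textbf{z}$ with $\textbf{z}=(\varphi(\tau)-\textbf{y}_{\rm min})+\varphi(\sigma_{I,k})\geq\textbf{0}$, so applying Proposition \ref{ProX} to the acyclic $T_I$-induced sub-net started at $M_b$ yields $\rho\in\varphi^{-1}(\textbf{z})\cap T_I^\ast$ with $M_b[\rho\rangle M$, that is, $M\in R_I(M_b)$ (Definition \ref{DefIR}).

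The only delicate points are (i) extracting a genuinely \emph{minimal} explanation $\textbf{y}_{\rm min}\leq\varphi(\tau)$ rather than just some explanation, which relies on well-foundedness of the order on $\mathbb{N}^{n_I}$, and (ii) converting the algebraic identity $M=M_b+C_I\textbf{z}$ with $\textbf{z}\geq\textbf{0}$ into an actual implicit firing sequence --- this is precisely where acyclicity of the $T_I$-induced sub-net is essential, via Proposition \ref{ProX}. The rest is routine bookkeeping on the decomposition of $\sigma$.
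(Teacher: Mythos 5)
Your proof is correct and follows essentially the same route as the paper: the ``if'' direction is the same concatenation argument (you just make explicit the induction showing $\mathcal{M_{B_M}}\subseteq R(N,M_0)$), and your ``only if'' direction is exactly the standard-BRG argument --- decompose $\sigma$ into implicit segments and explicit transitions, extract a minimal explanation below $\varphi(\tau)$, and push the surplus after $t_k$ via acyclicity and Proposition~\ref{ProX}. The paper compresses that second direction into a one-line reduction, citing the known result for $\mathcal{M_B}$ together with the inclusion $\mathcal{M_B}\subseteq\mathcal{M_{B_M}}$ (your observation that the maximal-explanation nodes are irrelevant here is precisely what makes that reduction work), so your version is simply a self-contained expansion of the cited argument.
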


\begin{proof}
(only if) It is shown in \cite{c8} that such a property holds for the set of basis markings $\mathcal{M_B}$. Since $\mathcal{M_{B_M}}\supseteq \mathcal{M_B}$, the result follows.

(if) Since $M\in R_I(M_b)$, according to Definition \ref{DefIR}, there exists a firing sequence $\sigma\in T_I^{*}$ such that $M_b[\sigma\rangle M$. On the other hand, there exists another firing sequence $\sigma^{\prime}\in T^*$ such that $M_0[\sigma^{\prime}\rangle M_b$, which implies that $M_0[\sigma^{\prime}\sigma\rangle M$ and concludes the proof.
\end{proof}

In summary, a marking $M$ is reachable from $M_0$ if and only if it belongs to the implicit reach of a minimax basis marking $M_b$ and thus $M$ can be characterized by a linear equation, i.e., $M=M_b+C_I\cdot \textbf{y}_I$, where $\textbf{y}_I=\varphi(\sigma_I)$, $\sigma_I\in T_I^{*}$ and $M_b[\sigma_I\rangle M$.
%
%\begin{remark}\label{Remark1}
%Given a marked net $\langle N, M_0\rangle$ with $\pi=(T_E, T_I)$, $\mathcal{M_{B_M}}$ and $\mathcal{M_{B}}$ are the sets of minimax-basis markings and basis markings of minimax-BRG and BRG of $\langle N, M_0\rangle$, respectively.
%$|\mathcal{M_{B_M}}|=|\mathcal{M_{B}}|=|R(N, M_0)|$ if $T_E=T$ and $T_I=\emptyset$.
%\end{remark}

\section{Verifying Nonblockingness of Bounded Plants Using Minimax-BRGs}\label{NewSection}

In this section, we investigate how minimax-BRGs can be applied to solving the NB-V problem.

\subsection{Unobstructiveness of Minimax-BRGs}\label{sec2.1}
This subsection generalizes the notion of unobstructiveness that is given in \cite{Gu} for a BRG to a minimax-BRG. Such a property is essential to establish our method since it is strongly related to the nonblockingness of a Petri net.
First, we define the set of \textit{i-coreachable minimax basis markings}, denoted by $\mathcal{M_{\rm i_{co}}}$, from which at least one of the final markings in $\mathcal{M_F}$ is reachable by firing implicit transitions only.

%{\blue We first define the set of i-coreachable markings in the set of minimax-basis markings. As a subset of the minimax-basis markings, from which at least one of the final markings in $\mathcal{M_F}$ is reachable by firing implicit transitions only. Markings in $\mathcal{M_{\rm i_{co}}}$ are called \textit{i-coreachable basis markings}.}
%To avoid confusion, instead of using the notation $\mathcal{M_{\rm i_{co}}}$ for BRG, we adopt $\mathcal{M^{\rm \dagger}_{\rm i_{co}}}$ here as a notation.
\begin{definition}\label{DefRe}
{\rm Consider a bounded plant $G=(N, M_0,\\ \mathcal{M_F})$ with the set of minimax basis markings $\mathcal{M_{B_M}}$ in its minimax-BRG.
The set of i-coreachable minimax basis markings of $\mathcal{M_{B_M}}$ is defined as $\mathcal{M_{\rm i_{co}}}=\{M_b\in \mathcal{M_{B_M}}| R_I(M_b)\cap \mathcal{M_F}\neq \emptyset\}.$$\hfill\square$}
\end{definition}

\begin{proposition}\label{pro1027}
{\rm Given a set of final markings defined by a single GMEC $\mathcal{L}_{(\textbf{w},k)}$ and a minimax basis marking $M_b$, $M_b$ belongs to $\mathcal{M_{\rm i_{co}}}$ if and only if the following set of integer constraints is feasible.
%
%GMEC, i.e., $\mathcal{M_F}=\mathcal{L}_{(\textbf{w},k)}=\{M\in\mathbb{N}^m| \textbf{w}^T\cdot M \leq k\}.$

\begin{equation}\label{Equre0.16}
\left\{
             \begin{array}{lr}
            M_b+C_I\cdot \textbf{y}_I=M;\\
            \textbf{w}^T\cdot M \leq k;\\
            \textbf{y}_I\in \mathbb{N}^{n_I};\\
            M\in \mathbb{N}^{m}.
                &
             \end{array}
\right.
\end{equation}}
\end{proposition}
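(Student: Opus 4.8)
The plan is to prove both directions by unfolding the definition of $\mathcal{M_{\rm i_{co}}}$ and invoking the algebraic characterization of the implicit reach. Recall from Definition \ref{DefRe} that $M_b\in\mathcal{M_{\rm i_{co}}}$ means $R_I(M_b)\cap\mathcal{M_F}\neq\emptyset$, i.e., there exists a marking $M$ with $M\in R_I(M_b)$ and $M\in\mathcal{M_F}$. Since $\mathcal{M_F}=\mathcal{L}_{(\textbf{w},k)}=\{M\in\mathbb{N}^m\mid\textbf{w}^T\cdot M\leq k\}$, the membership $M\in\mathcal{M_F}$ is exactly the constraint $\textbf{w}^T\cdot M\leq k$ together with $M\in\mathbb{N}^m$. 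The remaining task is to translate $M\in R_I(M_b)$ into the first constraint of \eqref{Equre0.16}.

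First I would establish the ``only if'' direction. Assume $M_b\in\mathcal{M_{\rm i_{co}}}$, so there is some $M\in R_I(M_b)\cap\mathcal{M_F}$. By Definition \ref{DefIR} and the algebraic form of the implicit reach stated right after it (which applies because the $T_I$-induced sub-net is acyclic, via Proposition \ref{ProX}), $M\in R_I(M_b)$ is equivalent to the existence of a firing vector $\textbf{y}_I\in\mathbb{N}^{n_I}$ with $M=M_b+C_I\cdot\textbf{y}_I$. Together with $\textbf{w}^T\cdot M\leq k$ and $M\in\mathbb{N}^m$, this exhibits a feasible solution $(\textbf{y}_I,M)$ of \eqref{Equre0.16}.

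For the ``if'' direction I would simply reverse this reasoning: given a feasible $(\textbf{y}_I,M)$ of \eqref{Equre0.16}, the first and third constraints together with the acyclicity of the $T_I$-induced sub-net and Proposition \ref{ProX} yield a sequence $\sigma_I\in T_I^*$ with $M_b[\sigma_I\rangle M$, so $M\in R_I(M_b)$; the second and fourth constraints give $M\in\mathcal{M_F}$. Hence $R_I(M_b)\cap\mathcal{M_F}\neq\emptyset$, i.e., $M_b\in\mathcal{M_{\rm i_{co}}}$.

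The proof is essentially a dictionary translation between the set-theoretic statement and the integer program, so there is no deep obstacle; the one point requiring care is ensuring that the equivalence between $M\in R_I(M_b)$ and ``$\exists\,\textbf{y}_I\in\mathbb{N}^{n_I}:M=M_b+C_I\textbf{y}_I$'' is invoked correctly — this is where acyclicity of the $T_I$-induced sub-net (guaranteed by the basis partition) and Proposition \ref{ProX} are genuinely used, and it would be worth stating explicitly that without acyclicity only the ``$\Rightarrow$'' implication would hold in general, which would break the ``if'' direction.
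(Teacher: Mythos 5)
Your proposal is correct and follows essentially the same route as the paper: both directions reduce to the algebraic characterization of $R_I(M_b)$ via acyclicity of the $T_I$-induced sub-net and Proposition \ref{ProX}, combined with the GMEC description of $\mathcal{M_F}$. Your write-up is in fact somewhat more explicit than the paper's (whose ``only if'' direction omits the appeal to the algebraic form of the implicit reach), but there is no substantive difference in approach.
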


\begin{proof}
(only if) Since $M_b\in \mathcal{M_{\rm i_{co}}}$, according to Definition \ref{DefRe}, $R_I(M_b)\cap \mathcal{M_F}\neq \emptyset$.
Therefore, integer constraints (\ref{Equre0.16}) meets feasible solution $\textbf{y}_I$.

(if) The state equation $M_b+C_I\cdot \textbf{y}_I=M$ provides necessary and sufficient conditions for reachability since the implicit sub-net is acyclic (see Proposition \ref{ProX}). Moreover, $M\in\mathcal{L}_{(\textbf{w},k)}$ is a final marking. Therefore, the statement holds.
\end{proof}

The notion of unobstructiveness in a minimax-BRG is given in Definition \ref{ReDef2}.
In the following, we show how the unobstructiveness of a minimax-BRG is related to the nonblockingness of the corresponding Petri net.

\begin{definition}\label{ReDef2}
{\rm Given a minimax-BRG $\mathcal{B_M}=(\mathcal{M_{B_M}},\\
{\rm Tr_{\mathcal{M}}}, \Delta_{\mathcal{M}}, M_0)$ and a set of i-coreachable minimax basis markings $\mathcal{M_{\rm i_{co}}}\subseteq\mathcal{M_{B_M}}$,
$\mathcal{B_M}$ is said to be \textit{unobstructed} if for all $M_b\in \mathcal{M_{B_M}}$ there exist a marking $M_b^{\prime}\in\mathcal{M_{\rm i_{co}}}$ in $\mathcal{B_M}$ and a firing sequence $\sigma^{+}\in{\rm Tr}_\mathcal{M}^*$ such that $(M_b, \sigma^{+}, {M_b}^{\prime})\in\Delta_\mathcal{M}$.
Otherwise it is \textit{obstructed}.$\hfill\square$}
\end{definition}

\begin{proposition}\label{Pro001}
{\rm Given a plant $(N, M_0, \mathcal{M_F})$, its minimax-BRG is unobstructed if and only if all minimax basis markings are nonblocking.}
%i.e., for all $M_b\in \mathcal{M_B}, R(N, M_b)\cap \mathcal{M_F}\neq\emptyset$.
\end{proposition}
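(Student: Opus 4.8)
The plan is to prove both directions by leveraging Proposition~\ref{SRGnew}, which tells us that every reachable marking lies in the implicit reach $R_I(M_b)$ of some minimax basis marking $M_b\in\mathcal{M_{B_M}}$, together with Definition~\ref{NB} (blocking/nonblocking markings) and Definition~\ref{ReDef2} (unobstructiveness). The key conceptual point to establish first is the following bridge lemma, which I would state and prove as an intermediate step: a minimax basis marking $M_b$ is nonblocking \emph{as a marking of the plant} (i.e.\ $R(N,M_b)\cap\mathcal{M_F}\neq\emptyset$) if and only if in $\mathcal{B_M}$ there is a path $\sigma^{+}\in{\rm Tr}_\mathcal{M}^*$ from $M_b$ to some $M_b'\in\mathcal{M_{\rm i_{co}}}$. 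Once this equivalence is in hand, unobstructiveness of $\mathcal{B_M}$ is \emph{by definition} the statement that this path exists for every $M_b\in\mathcal{M_{B_M}}$, so it is equivalent to the claim that every minimax basis marking is nonblocking.

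For the bridge lemma, the ``if'' direction is the routine one: if $(M_b,\sigma^{+},M_b')\in\Delta_\mathcal{M}$ with $M_b'\in\mathcal{M_{\rm i_{co}}}$, then unfolding the transition relation $\Delta_\mathcal{M}$ (each pair $(t,\textbf{y})$ corresponds to firing some $\sigma_I\in\varphi^{-1}(\textbf{y})\subseteq T_I^*$ followed by $t\in T_E$, which is a legitimate transition sequence by Proposition~\ref{ProX} applied to the acyclic $T_I$-induced subnet) gives a firing sequence $M_b[\cdot\rangle M_b'$ in $\langle N,M_0\rangle$; then since $M_b'\in\mathcal{M_{\rm i_{co}}}$ there is $\sigma_I'\in T_I^*$ with $M_b'[\sigma_I'\rangle M_f\in\mathcal{M_F}$, so concatenating shows $M_f\in R(N,M_b)$ and $M_b$ is nonblocking. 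The ``only if'' direction is the crux: suppose $M_b$ is nonblocking, so there is $\sigma\in T^*$ with $M_b[\sigma\rangle M_f$ and $M_f\in\mathcal{M_F}$. I would factor $\sigma$ according to its explicit transitions, $\sigma=\sigma_I^{(0)}t_1\sigma_I^{(1)}t_2\cdots t_k\sigma_I^{(k)}$ with $t_i\in T_E$ and $\sigma_I^{(j)}\in T_I^*$, and argue by induction on $k$ that the intermediate markings $M_b=M^{(0)}, M^{(1)},\ldots$ reached after each explicit transition together with its accompanying implicit block can be ``tracked'' inside $\mathcal{B_M}$ — meaning there is a minimax basis marking reachable in $\mathcal{B_M}$ from $M_b$ whose implicit reach contains the next such intermediate marking.

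The heart of this induction — and the step I expect to be the main obstacle — is showing that any intermediate implicit block $\sigma_I^{(j)}$ that serves as an explanation of the next explicit transition $t_{j+1}$ can be ``squeezed'' between a minimal and a maximal explanation vector, so that the resulting marking after firing $t_{j+1}$ still lies in the implicit reach of a minimax basis marking appearing in $\mathcal{B_M}$. The reason the minimax construction (Definition~\ref{Def1}), as opposed to the plain BRG, suffices here is precisely that it keeps \emph{both} $Y_{\rm min}$ and $Y_{\rm max}$: a marking reachable by an arbitrary implicit explanation $\textbf{y}$ with $\textbf{y}_{\rm min}\le\textbf{y}\le\textbf{y}_{\rm max}$ componentwise (for suitable minimal $\textbf{y}_{\rm min}\in Y_{\rm min}$ and maximal $\textbf{y}_{\rm max}\in Y_{\rm max}$) is squeezed, via the linear state equation $M=M_b+C_I\textbf{y}+C(\cdot,t)$ and the acyclicity of the $T_I$-subnet, into the implicit reach of one of the two minimax basis successors. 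I would need to be careful that ``$\textbf{y}_{\rm min}\le\textbf{y}\le\textbf{y}_{\rm max}$'' can always be arranged in the relevant poset $Y(M,t)$ — this is where a partial-order argument on explanation vectors is required, using that $Y_{\rm min}(M,t)$ and $Y_{\rm max}(M,t)$ are the minimal and maximal antichains of the finite poset $Y(M,t)$ and that $C_I$ monotonically transports order along implicit firings. Finally, the base of the induction is trivial ($M_b$ itself, $k=0$, $M_f\in R_I(M_b)$ means $M_b\in\mathcal{M_{\rm i_{co}}}$), and assembling the inductive steps yields the desired path in $\mathcal{B_M}$ from $M_b$ into $\mathcal{M_{\rm i_{co}}}$, completing the ``only if'' direction and hence the proposition.
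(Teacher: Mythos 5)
Your overall route is the same as the paper's: the easy direction unfolds a path in $\Delta_\mathcal{M}$ into an actual firing sequence of the net and appends the implicit suffix leading to a final marking, and the hard direction decomposes a firing sequence $M_b[\sigma\rangle M_f$ along its explicit transitions and re-normalizes each implicit block into an explanation that is recorded as an edge of the minimax-BRG. Your treatment of the easy direction is complete and correct.

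The problem is the step you yourself flag as ``the main obstacle'': it is left unproved, and your diagnosis of what it requires is off. You do not need to sandwich the accumulated implicit vector $\textbf{y}$ between a minimal and a maximal explanation vector. You only need a minimal explanation $\textbf{y}_{\rm min}\in Y_{\rm min}(M_b^{(j)},t_{j+1})$ with $\textbf{y}_{\rm min}\leq\textbf{y}$, which always exists because $Y(M_b^{(j)},t_{j+1})\subseteq\mathbb{N}^{n_I}$ is well-founded under $\leq$. The genuinely nontrivial ingredient is then that the residual $\textbf{y}-\textbf{y}_{\rm min}$ remains firable \emph{after} $t_{j+1}$, so that the marking actually reached lies in the implicit reach of the minimal-explanation successor; this is precisely Proposition~\ref{CoreTheorem} (equivalently, the transition-reordering argument of Theorem~3.8 in the BRG literature that the paper's proof invokes), and your sketch never supplies it. By contrast, the upper bound $\textbf{y}\leq\textbf{y}_{\rm max}$ is neither needed here nor free: Lemma~\ref{NewLemma2} guarantees it only for deadlock-free nets, whereas Proposition~\ref{Pro001} is stated for arbitrary plants. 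The maximal explanations earn their keep in Lemma~\ref{NewLemma3} and Theorems~\ref{Theoremminimax} and~\ref{0904Th}, where one must show that every \emph{reachable} marking can reach a minimax basis marking; for the present proposition, which quantifies only over markings already in $\mathcal{M_{B_M}}$, minimal explanations (hence ordinary BRG edges, all of which are present in $\mathcal{B_M}$) suffice, and the paper's proof uses only those, pushing all leftover implicit transitions to the end so that the final basis marking reached lies in $\mathcal{M_{\rm i_{co}}}$. Replace the min--max squeeze by this minimal-explanation rearrangement backed by Proposition~\ref{CoreTheorem} and your induction closes; as written, the crux is asserted, mislocated, and unproved.
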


\begin{proof}
(only if) If a minimax-BRG $\mathcal{B_M}$ is unobstructed, then for all $M_b\in \mathcal{M_{B_M}}$ there exist a marking $M_b^{\prime}\in\mathcal{M_{\rm i_{co}}}$ in $\mathcal{B_M}$ and a sequence of pairs $\sigma^{+}=(t_1, \textbf{y}_1), (t_2, \textbf{y}_2), \cdots, (t_k, \textbf{y}_k)\ (\sigma^{+}\in{\rm Tr}_\mathcal{M}^*)$ such that $(M_b, \sigma^{+}, {M_b}^{\prime})\in\Delta_\mathcal{M}$.
By Definition \ref{Def1}, this means that the net admits an evolution: $M_b[\sigma_1 t_1 \sigma_2 t_2 \cdots \sigma_k t_k\rangle {M_b}^{\prime}$, where $\sigma_i\in\varphi^{-1}(\textbf{y}_i)\ (i\in\{1, 2, \cdots, k\})$.
%***********************************************
Since $M_b^{\prime}\in\mathcal{M_{\rm i_{co}}}$, there exists an implicit firing sequence $\sigma_I$ such that $M_b^{\prime}[\sigma_I\rangle M_f$, where $M_f\in \mathcal{M_F}$.
Thus it holds that $M_b[\sigma_1 t_1 \sigma_2 t_2 \cdots \sigma_k t_k\rangle {M_b}^{\prime}[\sigma_I\rangle M_f$, implying that $M_b$ is nonblocking.
(if) We prove this part by contradiction. Assume $\mathcal{B_M}$ is obstructed. Then, there exists $M_b\in \mathcal{M_{B_M}}$ such that $M_b$ is not accessible to any of the i-coreachable marking in $\mathcal{M_{\rm i_{co}}}$ through $\sigma^{+}\in {\rm Tr}^*_\mathcal{M}$.
However, since $M_b$ is nonblocking, there exists a firing sequence $\sigma\in T^*$ and a final marking $M_f\in \mathcal{M_F}$ such that $M_b[\sigma\rangle M_f$.
We write $\sigma=\sigma_1t_{i_1}\cdots\sigma_kt_{i_k}\sigma_{k+1}$ where all $\sigma_i\in T_I^*, t_{i_j}\in T_E, j=1,\ldots, k$.
Following the procedure in the proof of Theorem 3.8 in \cite{c8}, we can repeatedly move transitions in each $\sigma_j$ ($j\in\{1, \ldots, k\}$) to somewhere after $t_{i_j}$ to obtain a new sequence $\sigma_{\rm min,1}t_{i_1}\sigma_{\rm min,2}t_{i_2}\cdots \sigma_{\rm min, k}t_{i_k}\sigma_{k+1}^{\prime}$ such that
\begin{center}
$M_b[\sigma_{\rm min,1}t_{i_1}\rangle M_{b,1}[\sigma_{\rm min,2}t_{i_2}\rangle\cdots[\sigma_{\rm min, k}t_{i_k}\rangle M_{b, k}[\sigma_{k+1}^{\prime}\rangle M_f,$
\end{center}
where each $\sigma_{\rm min,j}\in T_I^*$ is a minimal explanation of $t_{i_j}$ at $M_{b,j}\in \mathcal{M_{B_M}}$ for $j=1,\ldots k$.
Thus, $M_{b, k}\in \mathcal{M_{\rm i_{co}}}$, a contradiction.
\end{proof}

%{\red According to Proposition \ref{Pro001}, to determine the unobstructiveness of minimax-BRG $\mathcal{B_M}$, it is only required to check if all minimax basis markings in $\mathcal{M_{B_M}}$ are nonblocking. To verify it, one needs to analyze the minimax BRG by checking if all minimax basis markings are co-reachable to some i-coreachable minimax basis markings. An example is illustrated in the following.}
According to Proposition \ref{Pro001}, to determine the unobstructiveness of minimax-BRG $\mathcal{B_M}$, it is only required to check if all minimax basis markings are co-reachable to some i-coreachable minimax basis markings in $\mathcal{B_M}$.
This can be done by using some search algorithm (e.g., \textit{Dijkstra}) in the underlying digraph of the minimax-BRG, whose complexity is polynomial in $\mathcal{B_M}$.
%{\blue This can be technically coped by regarding $\mathcal{B_M}$ as a \textit{digraph} $\mathcal{G}$ and the set of i-coreachable markings as the set of targeted nodes in $\mathcal{G}$.
%Then, one may need to check for all non-targeted nodes that if there exists a directed path from a non-target node to at least a targeted one. Implemented by the searching approaches (e.g., depth-first search or breadth-first search), the complexity of the above-mentioned task is polynomial in the size of $\mathcal{G}$.}
An example is illustrated in the following to help clarify Proposition \ref{Pro001}.

%%%%Original%%%%
%According to Proposition \ref{Pro001}, to determine the unobstructiveness of minimax-BRG $\mathcal{B_M}$, we need to check if all minimax basis markings in $\mathcal{M_{B_M}}$ are nonblocking only, which can be verified by checking if all minimax basis markings are co-reachable to some i-coreachable minimax basis markings by analyzing the minimax BRG. An example is illustrated in the following.
%%%%Original%%%%

%%%%%%%%%%%%%%%%%%%%%%Original Example 3%%%%%%%%%%%%%%%%%%%%%%
%\begin{example}\label{Example3}
%{\rm Consider again the marked net $\langle N, M_0\rangle$ shown in Fig. \ref{SO1} and discussed in Example \ref{EXPSRG} with $M_0=[3\ 0\ 1\ 1]^{\rm T}$ and $T_E=\{t_2, t_4\}$.
%Assuming that the set of final markings is $\mathcal{M_F}=\mathcal{L}_{(\textbf{w},k)}$ where $\textbf{w}=[1\ 1\ 0\ 0]^{\rm T}$ and $k=1$, we want to verify the unobstructiveness of its minimax-BRG $\mathcal{B_M}$ shown in Fig. \ref{MBRG}.
%
%First we need to determine the set of i-coreachable minimax basis markings of this system by solving ILPP (\ref{Equre0.16}): we conclude that $\mathcal{M_{\rm i_{co}}}=\{[0\ 1\ 0\ 0]^{\rm T}, [1\ 0\ 0\ 0]^{\rm T}\}$.
%Since all minimax basis markings are co-reachable to a marking in $\mathcal{M_{\rm i_{co}}}$ in $\mathcal{B_M}$, the minimax-BRG is unobstructed}.$\hfill\square$
%\end{example}
%%%%%%%%%%%%%%%%%%%%%%Original Example 3%%%%%%%%%%%%%%%%%%%%%%

\begin{example}\label{Example3}
{\rm Consider again the parameterized plant $(N, M_0, \mathcal{M_F})$ in Fig. \ref{Fig1} (left) with $\alpha=1$, $T_E=\{t_2\}$ and $\mathcal{M_F}=\mathcal{L}_{(\textbf{w},k)}$ where $\textbf{w}=[1\ 1\ 0\ 0]^{\rm T}$ and $k=1$.
We explain how to verify the unobstructiveness of its minimax-BRG $\mathcal{B_M}$ shown in Fig. \ref{MBRG}.
By solving the linear constraint (\ref{Equre0.16}) in Proposition \ref{pro1027}, we conclude that $\mathcal{M_{\rm i_{co}}}=\{[2\ 0\ 1]^{\rm T}\}$.
Since there is no directed path from $M_{b1}$ to $M_{b0}$, $M_{b1}$ is not co-reachable to the only marking in $\mathcal{M_{\rm i_{co}}}$. Thus, the minimax-BRG $\mathcal{B_M}$ in Fig. \ref{MBRG} is obstructed}.$\hfill\square$
\end{example}

\subsection{Verifying Nonblockingness of Deadlock-Free Plants}\label{DeadlockFree}
%\subsection{\red Verifying Nonblockingness of Deadlock-Free PNSs Using Minimax-BRGs}
In this subsection, we focus on deadlock-free plants. An intermediate result is shown in Proposition \ref{CoreTheorem}.

\begin{proposition}\label{CoreTheorem}
{\rm Given a bounded marked net $\langle N, M_0\rangle$ with basis partition $\pi=(T_E, T_I)$, for all $M\in R(N, M_0)$, for all $t\in T_E$, for all $\sigma\in \Sigma(M, t)$ with $M[\sigma t\rangle M^{\prime}$, the following implication holds:
\begin{equation}
\begin{aligned}
(\forall \sigma^{\prime}\in\Sigma(M, t))\ \varphi(\sigma)-\varphi(\sigma^{\prime})&=\tilde{\textbf{y}}\geq \textbf{0} \Rightarrow\\
(\exists \sigma^{\prime\prime}\in&\varphi^{-1}(\tilde{\textbf{y}}))\ M[\sigma^{\prime}t \sigma^{\prime\prime}\rangle M^{\prime}
\\
\end{aligned}
\end{equation}}
\end{proposition}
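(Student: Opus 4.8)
The plan is to exploit the acyclicity of the $T_I$-induced sub-net together with Proposition~\ref{ProX}, reducing the claimed reordering of firings to an algebraic statement about the incidence matrix $C_I$. First I would observe that both $\sigma$ and $\sigma^{\prime}$ lie in $\Sigma(M,t)$, so by definition there exist markings $M_1$ and $M_1'$ with $M[\sigma\rangle M_1 \geq Pre(\cdot,t)$ and $M[\sigma^{\prime}\rangle M_1' \geq Pre(\cdot,t)$; since $M[\sigma t\rangle M^{\prime}$ we have $M^{\prime}=M_1+C(\cdot,t)$. The quantity $\tilde{\textbf{y}}=\varphi(\sigma)-\varphi(\sigma^{\prime})$ is a nonnegative integer vector by hypothesis, and $\varphi(\sigma),\varphi(\sigma^{\prime})\in\mathbb{N}^{n_I}$, so $\tilde{\textbf{y}}\in\mathbb{N}^{n_I}$ as well; the job is to produce a concrete $\sigma^{\prime\prime}\in\varphi^{-1}(\tilde{\textbf{y}})$ realizing the path $M[\sigma^{\prime} t\sigma^{\prime\prime}\rangle M^{\prime}$.

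Next I would compute the target marking after firing $\sigma^{\prime}t$, namely $M_2 := M_1' + C(\cdot,t) = M + C_I\cdot\varphi(\sigma^{\prime}) + C(\cdot,t)$, which is a legitimate reachable marking since $\sigma^{\prime}\in\Sigma(M,t)$ enables $t$. A direct computation gives $M_2 + C_I\cdot\tilde{\textbf{y}} = M + C_I\cdot(\varphi(\sigma^{\prime})+\tilde{\textbf{y}}) + C(\cdot,t) = M + C_I\cdot\varphi(\sigma) + C(\cdot,t) = M_1 + C(\cdot,t) = M^{\prime}$, where I have used that $\sigma\in T_I^*$ so $M_1 = M + C_I\cdot\varphi(\sigma)$. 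Thus $M^{\prime} = M_2 + C_I\cdot\tilde{\textbf{y}} \geq \textbf{0}$, and since the $T_I$-induced sub-net is acyclic, Proposition~\ref{ProX} applied to that sub-net yields a sequence $\sigma^{\prime\prime}\in\varphi^{-1}(\tilde{\textbf{y}})$ (over $T_I$) with $M_2[\sigma^{\prime\prime}\rangle M^{\prime}$ in the sub-net, hence also in $N$. Chaining the three segments $M[\sigma^{\prime}\rangle M_1'[t\rangle M_2[\sigma^{\prime\prime}\rangle M^{\prime}$ gives exactly $M[\sigma^{\prime}t\sigma^{\prime\prime}\rangle M^{\prime}$.

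The one point requiring a little care — and what I expect to be the main obstacle — is the application of Proposition~\ref{ProX}: that proposition is stated for an acyclic \emph{net} with two reachable markings of that net, whereas here I want to invoke it for the $T_I$-induced sub-net with the intermediate markings $M_2$ and $M^{\prime}$. I would need to check that $M_2$ and $M^{\prime}$ are reachable markings of the sub-net from a common reference (e.g. $M_2$ itself, taking it as the "initial" marking of the sub-net, so that reachability of $M_2$ is trivial and $M^{\prime} = M_2 + C_I\cdot\tilde{\textbf{y}}\geq\textbf{0}$ with $\tilde{\textbf{y}}\in\mathbb{N}^{n_I}$ supplies the hypothesis), and that the firing sequence produced over $T_I$ remains enabled in the full net $N$ — which is immediate because enabling a transition only requires a lower bound on tokens and adding the places unchanged by $T\setminus T_I$ cannot disable a $T_I$-transition. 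Once this bookkeeping is in place the proof is complete; no genuinely hard estimate is involved, only the correct invocation of the acyclic-reachability characterization.
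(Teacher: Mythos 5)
Your proposal is correct and follows essentially the same route as the paper's proof: you set $M_2$ (the paper's $M^{\prime\prime}$) to be the marking reached by $\sigma^{\prime}t$, derive $M^{\prime}=M_2+C_I\cdot\tilde{\textbf{y}}$ algebraically, and invoke the acyclicity of the $T_I$-induced sub-net (Proposition~\ref{ProX}) to realize $\tilde{\textbf{y}}$ as a firable sequence $\sigma^{\prime\prime}$. Your extra care about applying Proposition~\ref{ProX} to the sub-net with $M_2$ as reference marking is a reasonable tightening of a step the paper leaves implicit, but it does not change the argument.
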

%\begin{equation}
%\begin{aligned}
%(\forall\sigma\in \Sigma(M, t))\land (M[\sigma t\rangle \bar{M})\Rightarrow&\\
%(\forall \sigma^{\prime}\in\Sigma(M, t))\ & \varphi(\sigma)-\varphi(\sigma^{\prime})=\tilde{\textbf{y}}\geq \textbf{0} \\
%(\exists \sigma^{\prime\prime}\in\varphi^{-1}(\tilde{\textbf{y}}))&\ M[\sigma^{\prime}t \sigma^{\prime\prime}\rangle \bar{M}
%\\
%\end{aligned}
%\end{equation}
%\end{proposition}

\begin{proof}
Let $M^{\prime\prime}\in \mathbb{N}^{m}$ such that $M[\sigma^{\prime} t\rangle M^{\prime\prime}$. Then it holds that:
\begin{equation}\label{EQUNEW1}
\left\{
             \begin{array}{lr}
            M^{\prime}=M+C_I\cdot \varphi(\sigma)+C(\cdot, t) \\
            M^{\prime\prime}=M+C_I\cdot \varphi(\sigma^{\prime})+C(\cdot, t)
             \end{array}
\right.
\end{equation}

From Equation (\ref{EQUNEW1}) we conclude that $M^{\prime}-M^{\prime\prime}=C_I(\varphi(\sigma)-\varphi(\sigma^{\prime}))$, which implies $M^{\prime}=M^{\prime\prime}+C_I\cdot \tilde{\textbf{y}}$ and $\tilde{\textbf{y}}\in \mathbb{N}^{n}$. This indicates:
\begin{equation}
\begin{aligned}
\exists\sigma^{\prime\prime}\in \varphi^{-1}(\tilde{\textbf{y}}): M^{\prime\prime}[\sigma^{\prime\prime}\rangle M^{\prime}
\end{aligned}
\end{equation}
and thus $M[\sigma^{\prime}t\rangle M^{\prime\prime}[\sigma^{\prime\prime}\rangle M^{\prime}$ that concludes the proof.
%Let $M^{\prime\prime}\in \mathbb{N}^{m}$ such that $M[\sigma^{\prime} t\rangle M^{\prime\prime}$. Then it holds that:
%\begin{equation}\label{EQUNEW1}
%\left\{
%             \begin{array}{lr}
%            M^{\prime}=M+C_I\cdot \varphi(\sigma)+C(\cdot, t) \\
%            M^{\prime\prime}=M+C_I\cdot \varphi(\sigma^{\prime})+C(\cdot, t)
%             \end{array}
%\right.
%\end{equation}
%
%From Equation (\ref{EQUNEW1}) we conclude that $M^{\prime}-M^{\prime\prime}=C_I(\varphi(\sigma)-\varphi(\sigma^{\prime}))$, which implies $M^{\prime}=M^{\prime\prime}+C_I\cdot \tilde{\textbf{y}}$ and $\tilde{\textbf{y}}\in \mathbb{N}^{n}$. This indicates:
%\begin{equation}
%\begin{aligned}
%\exists\sigma^{\prime\prime}\in \varphi^{-1}(\tilde{\textbf{y}}): M^{\prime\prime}[\sigma^{\prime\prime}\rangle M^{\prime}
%\end{aligned}
%\end{equation}
%and thus $M[\sigma^{\prime}t\rangle M^{\prime\prime}[\sigma^{\prime\prime}\rangle M^{\prime}$ that concludes the proof.
\end{proof}

Proposition \ref{CoreTheorem} shows the connection between two markings $M^{\prime}$ and $M^{\prime\prime}$ reachable from $M\in R(N, M_0)$, i.e., $M[\sigma t\rangle M^{\prime}$ and $M[\sigma^{\prime} t\rangle M^{\prime\prime}$, where $t\in T_E$, $\sigma\in\Sigma(M, t)$, $\sigma^{\prime}\in\Sigma(M, t)$, and $\varphi(\sigma)-\varphi(\sigma^{\prime})=\tilde{\textbf{y}}\geq 0$.
If $M^{\prime}$ is nonblocking, then $M^{\prime\prime}$ is nonblocking as well, since there exists a firing sequence $\sigma^{\prime\prime}\in \varphi^{-1}(\tilde{\textbf{y}})$ such that $M^{\prime\prime}[\sigma^{\prime\prime}\rangle M^{\prime}$.
According to this proposition, we next show that the unobstructiveness of the minimax-BRG is a necessary and sufficient condition for nonblockingness of a net in the considered class.

\begin{lemma}\label{NewLemma}
{\rm Consider a bounded deadlock-free marked net $\langle N, M_0\rangle$ with a basis partition $\pi=(T_E, T_I)$.
For all markings $M\in R(N, M_0)$, there exists a firing sequence $\sigma t$, where $\sigma \in T_I^{*}$ and $t\in T_E$, such that $M[\sigma t\rangle$ holds.}
\end{lemma}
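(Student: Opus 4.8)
The plan is to prove the lemma by a greedy construction. I would fix an arbitrary $M\in R(N,M_0)$ and build a maximal firing sequence $M = M^{(0)}[u_1\rangle M^{(1)}[u_2\rangle\cdots$ in which every $u_i\in T_I$, continuing as long as the current marking enables some implicit transition, and stopping at the first marking $M^{(k)}$ that enables no transition of $T_I$. The claim is then that this process always stops and that the resulting $\sigma = u_1\cdots u_k\in T_I^{*}$ (possibly $\sigma=\epsilon$, when $M$ itself already enables an explicit transition) is the desired prefix.

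The crux is showing that this process terminates. Since $\langle N,M_0\rangle$ is bounded, $R(N,M_0)$ is finite and every $M^{(i)}$ lies in it; so if the sequence were infinite some marking would repeat, $M^{(i)} = M^{(j)}$ with $i<j$, yielding a nonempty implicit firing sequence $M^{(i)}[u_{i+1}\cdots u_j\rangle M^{(i)}$, equivalently $C_I\cdot\varphi(u_{i+1}\cdots u_j)=\textbf{0}$ with $\varphi(u_{i+1}\cdots u_j)\gneq\textbf{0}$. The hard part will be ruling this out using acyclicity of the $T_I$-induced sub-net $N_I$: such a token-neutral repeating sequence forces, among its support transitions and the places feeding them, a directed closed walk and hence a directed cycle in $N_I$, contradicting its acyclicity. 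This is the same interplay of boundedness and acyclicity already invoked in the paper to guarantee that $Y(M,t)$ is finite and that Algorithm~\ref{AlgoMax} halts, so I would phrase the termination argument consistently with that.

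With termination in hand the rest is immediate: $M^{(k)}\in R(N,M_0)$ and, since $\langle N,M_0\rangle$ is deadlock-free, $M^{(k)}$ is not dead, so some $t\in T$ is enabled at $M^{(k)}$; by maximality of the construction no implicit transition is enabled there, hence $t\in T_E$, and $M[\sigma t\rangle$ holds. Equivalently one can run this as a contradiction: if no such $\sigma t$ existed, deadlock-freeness would keep an implicit transition enabled at every marking reached from $M$ by implicit firings, producing exactly the infinite implicit firing sequence excluded above. I expect the only genuine obstacle to be this termination/finiteness claim; the remaining steps are bookkeeping. The one subtlety I would flag is the degenerate case of $T_I$-transitions with empty preset, but, as elsewhere in this framework, it is precisely the acyclicity of $N_I$ together with boundedness that prevents an implicit transition from firing indefinitely, so the argument goes through under the paper's standing assumptions.
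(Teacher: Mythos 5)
Your proposal is correct and follows essentially the same route as the paper's proof: fire implicit transitions until none is enabled (termination guaranteed by boundedness together with acyclicity of the $T_I$-induced sub-net), then invoke deadlock-freeness to conclude that the marking so reached is not dead and therefore must enable an explicit transition. The paper dispatches the termination step in one line (``the maximal length of sequences enabled at $M$ and composed by only implicit transitions is finite''), whereas you justify it via marking repetition and a cycle argument; this is a more detailed rendering of the same idea rather than a different approach.
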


\begin{proof}
We prove this statement by contradiction.
Assume the system is deadlock-free and there exists a marking $M$ from which all explicit transitions are not enabled.
Since the implicit sub-net of the system is bounded and acyclic, the maximal length of sequences enabled at $M$ and composed by only implicit transitions is finite.
Hence, from $M$, after the firing of such maximal sequences of implicit transitions, the net reaches a deadlock, which is a contradiction.
\end{proof}

The result in Lemma \ref{NewLemma} can be applied to both BRG and minimax-BRG.
However, it does not imply that the marking reached after the firing of the explicit transition is a basis marking, as we have shown in Example \ref{E1}. Hence, it does not rule out the presence of livelocks in the BRG.

\begin{lemma}\label{NewLemma2}
{\rm Consider a bounded deadlock-free marked net $\langle N, M_0\rangle$ with a basis partition $\pi=(T_E, T_I)$.
For all markings $M\in R(N, M_0)$, for all explicit transition $t\in T_E$, the following holds:
\begin{center}
$\sigma\in\Sigma(M,t)\Rightarrow(\exists\sigma^{\prime}\in \Sigma_{\rm max}(M,t))\ \varphi(\sigma^{\prime})\geq\varphi(\sigma).$
\end{center}}
\end{lemma}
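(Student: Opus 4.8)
The plan is to reduce the claim to an elementary property of finite posets: in a finite partially ordered set, every element lies below some maximal element. The relevant objects are the explanation set $\Sigma(M,t)$ and the associated set of explanation vectors $Y(M,t)=\{\varphi(\sigma)\mid\sigma\in\Sigma(M,t)\}\subseteq\mathbb{N}^{n_I}$, ordered componentwise; by Definition~\ref{DEFMAX}, $\Sigma_{\rm max}(M,t)$ consists exactly of the explanations whose firing vector is a maximal element of $Y(M,t)$, and $Y_{\rm max}(M,t)$ is the set of maximal elements of $Y(M,t)$.

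First I would fix $M\in R(N,M_0)$, $t\in T_E$ and an explanation $\sigma\in\Sigma(M,t)$; in particular $\Sigma(M,t)\neq\emptyset$, hence $\varphi(\sigma)\in Y(M,t)\neq\emptyset$. The crucial intermediate fact to invoke is that $Y(M,t)$ is \emph{finite}. This is precisely the observation recorded in the remark following Algorithm~\ref{AlgoMax}: since $\langle N,M_0\rangle$ is bounded, the set $R_I(M)$ of markings reachable from $M$ by implicit transitions is contained in $R(N,M_0)$ and hence finite, and since the $T_I$-induced sub-net is acyclic, there is no implicit firing sequence enabled at $M$ of unbounded weight landing in this finite marking set; consequently only finitely many firing vectors arise among the explanations of $t$ at $M$.

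Given finiteness, the argument is: regard $(Y(M,t),\le)$ as a finite poset and start from $\varphi(\sigma)$. If $\varphi(\sigma)$ is already maximal, put $\mathbf{y}^{\ast}=\varphi(\sigma)$; otherwise pick some $\mathbf{y}_1\in Y(M,t)$ with $\mathbf{y}_1\gneq\varphi(\sigma)$ and repeat. Each step strictly increases the vector within a finite set, so after finitely many steps the process halts at a maximal element $\mathbf{y}^{\ast}\ge\varphi(\sigma)$. By the definition of $Y_{\rm max}(M,t)$ there exists $\sigma^{\prime}\in\Sigma_{\rm max}(M,t)$ with $\varphi(\sigma^{\prime})=\mathbf{y}^{\ast}\ge\varphi(\sigma)$, which is exactly the stated implication.

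The only step requiring genuine care is the finiteness of $Y(M,t)$: the ``dominated by a maximal element'' property fails for general infinite subsets of $\mathbb{N}^{n_I}$ (for instance $\{(0,k)\mid k\in\mathbb{N}\}$ has no maximal element), so the proof does rely on both boundedness of the net and acyclicity of the implicit sub-net, exactly as in that remark. Everything else is bookkeeping; in particular, the deadlock-freeness hypothesis inherited from the surrounding development is not actually used in this lemma.
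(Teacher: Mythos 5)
Your proof is correct and follows essentially the same route as the paper's: both reduce the claim to the observation that a non-maximal explanation is dominated by a maximal one. Your version is in fact more careful than the paper's one-line argument, which jumps directly from ``$\sigma\notin\Sigma_{\rm max}(M,t)$'' to the existence of a \emph{maximal} $\sigma^{\prime}$ with $\varphi(\sigma^{\prime})>\varphi(\sigma)$, whereas Definition~\ref{DEFMAX} only guarantees some strictly larger explanation; your explicit appeal to the finiteness of $Y(M,t)$ (from boundedness and acyclicity of the $T_I$-induced sub-net) and the resulting finite-poset ascent is exactly what is needed to close that gap, and your remark that deadlock-freeness is not used here is also accurate.
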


\begin{proof}
If $\sigma\notin\Sigma_{\rm max}(M,t)$, according to Definition \ref{DEFMAX}, there exists an explanation $\sigma^{\prime}\in \Sigma_{\rm max}(M,t))$ such that $\varphi(\sigma^{\prime})>\varphi(\sigma)$; otherwise $\varphi(\sigma^{\prime})=\varphi(\sigma)$, hence the result holds.
\end{proof}

\begin{lemma}\label{NewLemma3}
{\rm Given a bounded deadlock-free marked net $\langle N, M_0\rangle$ with a basis partition $\pi=(T_E, T_I)$, the set of minimax basis markings of the system is $\mathcal{M_{B_M}}$.
For all markings $M\in R(N, M_0)$, there exists $M_b\in \mathcal{M_{B_M}}$ such that $M_b\in R(N, M)$.}
\end{lemma}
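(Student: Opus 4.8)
The plan is to show that starting from any reachable marking $M$, we can construct a path (in the original net) that leads to a minimax basis marking, alternating blocks of implicit transitions with single explicit transitions, where each implicit block is chosen to be a \emph{maximal} explanation. The key idea is that each such "maximal step" makes progress towards a member of $\mathcal{M_{B_M}}$, and a well-chosen progress measure guarantees termination.

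First I would invoke Lemma \ref{NewLemma}: since the net is bounded and deadlock-free, from $M$ there is always a sequence $\sigma t$ with $\sigma\in T_I^*$ and $t\in T_E$ such that $M[\sigma t\rangle$. By Lemma \ref{NewLemma2}, any explanation $\sigma\in\Sigma(M,t)$ is dominated (componentwise on firing vectors) by some maximal explanation $\sigma'\in\Sigma_{\mathrm{max}}(M,t)$; thus $M$ in particular admits some explicit transition $t\in T_E$ with $\Sigma_{\mathrm{max}}(M,t)\neq\emptyset$, and we may fire $M[\sigma'_{\mathrm{max}}\, t\rangle M_1$ using a maximal explanation. Then I would argue that $M_1$ lies in the implicit reach of a minimax basis marking: by Proposition \ref{SRGnew}, $M\in R_I(M_b)$ for some $M_b\in\mathcal{M_{B_M}}$, i.e. $M=M_b+C_I\mathbf{y}$ for some $\mathbf{y}\in\mathbb{N}^{n_I}$; combining this with the fact that $\sigma'_{\mathrm{max}}$ is a maximal explanation of $t$ at $M$, one shows (using Proposition \ref{CoreTheorem} applied with the roles played by $M_b$ and $M$, and the acyclicity/state-equation argument of Proposition \ref{ProX}) that $M_1$ is reachable from the minimax basis marking $M_b' = M_b + C_I\,\mathbf{y}_{b,\mathrm{max}} + C(\cdot,t)$ obtained by firing at $M_b$ a maximal explanation of $t$ together with $t$ — which belongs to $\mathcal{M_{B_M}}$ by clause (b) of Definition \ref{Def1}. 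Hence $M_1\in R_I(M_b')$ with $M_b'\in\mathcal{M_{B_M}}$.

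Next I would set up the iteration: if $M_1$ is itself a minimax basis marking we are done; otherwise $M_1\in R_I(M_b')\setminus\{M_b'\}$, so $M_1 = M_b' + C_I\mathbf{z}$ with $\mathbf{z}\gneq\mathbf{0}$, and we repeat the construction from $M_1$. The main obstacle is termination: we must exhibit a quantity that strictly decreases along this process. The natural candidate is the "implicit distance" of $M_i$ from the minimax basis marking whose implicit reach contains it — formally, the minimal $\|\mathbf{y}\|_1$ over all $(M_b,\mathbf{y})$ with $M_b\in\mathcal{M_{B_M}}$ and $M_i = M_b + C_I\mathbf{y}$. I would argue that, because we always fire a \emph{maximal} explanation before the explicit transition, the implicit part "consumed" at $M_i$ is maximal, so the residual implicit vector needed to re-express $M_{i+1}$ from a minimax basis marking is strictly smaller than the one for $M_i$; this is exactly where Lemma \ref{NewLemma2}'s domination by a maximal explanation does the work, since any smaller explanation would have left strictly more implicit firing "unabsorbed." Since these residual vectors lie in $\mathbb{N}^{n_I}$ and strictly decrease in the componentwise order (or in $\ell_1$ norm), the process terminates after finitely many steps at some $M_b\in\mathcal{M_{B_M}}$, and by construction $M[\sigma_1 t_1\cdots \sigma_k t_k\rangle M_b$, i.e. $M_b\in R(N,M)$.

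I anticipate that the delicate point requiring care in the full write-up is making the progress/termination argument rigorous: one must be precise about \emph{which} minimax basis marking $M_b'$ "covers" $M_1$ and show that the associated residual implicit vector genuinely decreases — a naive bound using boundedness of the net (finiteness of $R(N,M_0)$) also suffices to rule out infinite non-repeating paths, but the cleaner argument is the monotone decrease of the residual. A fallback, if the decrease is not strict in every step, is to combine it with boundedness: the sequence $M, M_1, M_2,\ldots$ lives in the finite set $R(N,M_0)$, and a cycle among non-minimax-basis markings would, together with deadlock-freeness, contradict the maximality enforced at each step (it would force an unbounded implicit firing in the acyclic $T_I$-induced subnet). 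Either way, finiteness is secured and the lemma follows.
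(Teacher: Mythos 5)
Your opening moves match the paper's: Lemma \ref{NewLemma} gives $M[\sigma t\rangle$ with $\sigma\in T_I^*$ and $t\in T_E$, Lemma \ref{NewLemma2} gives a maximal explanation $\sigma'\in\Sigma_{\rm max}(M,t)$ dominating $\sigma$, and you fire $M[\sigma' t\rangle M_1$. But from there the argument goes astray, and the iteration you build is both unnecessary and not actually secured. The paper's proof is a \emph{single} step: the marking $M'$ reached via $M[\sigma' t\rangle M'$ is already a minimax basis marking, and Proposition \ref{CoreTheorem} (with $\tilde{\mathbf{y}}=\varphi(\sigma')-\varphi(\sigma)$) yields $M[\sigma t\sigma''\rangle M'$, so $M'\in R(N,M)$. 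The fact needed to make this work --- and which your write-up never isolates --- is that if $M_b[\sigma_I\rangle M$ with $\sigma_I\in T_I^*$ and $M_b\in\mathcal{M_{B_M}}$ (Proposition \ref{SRGnew}), then $\sigma_I\sigma'$ is a \emph{maximal} explanation of $t$ at $M_b$: otherwise some $\rho\in\Sigma(M_b,t)$ with $\varphi(\rho)\gneq\varphi(\sigma_I\sigma')$ would, by acyclicity of the $T_I$-induced subnet and Proposition \ref{ProX}, produce an explanation of $t$ at $M$ strictly dominating $\sigma'$, contradicting maximality. Hence $M'=M_b+C_I\cdot\varphi(\sigma_I\sigma')+C(\cdot,t)\in\mathcal{M_{B_M}}$ by clause (b) of Definition \ref{Def1}, and the proof closes in one round.

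Two concrete problems with your version. First, the claim that $M_1\in R_I(M_b')$ for $M_b'=M_b+C_I\cdot\mathbf{y}_{b,\rm max}+C(\cdot,t)$ is wrong as stated: maximal explanations are not unique, distinct maximal explanation vectors are pairwise incomparable, and $\varphi(\sigma_I)+\varphi(\sigma')-\mathbf{y}_{b,\rm max}$ need not be nonnegative; and if you instead choose $\mathbf{y}_{b,\rm max}\geq\varphi(\sigma_I)+\varphi(\sigma')$ (which Lemma \ref{NewLemma2} permits), the reachability runs in the \emph{opposite} direction --- $M_b'$ is reached \emph{from} $M_1$ by the residual implicit firings --- which already proves the lemma and makes the iteration pointless. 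Second, your termination argument is precisely the part that is never established: the ``residual implicit vector strictly decreases'' is asserted rather than derived (and you would be comparing residuals taken relative to possibly different covering minimax basis markings at each step), while the fallback via finiteness of $R(N,M_0)$ does not by itself exclude a cycle among non-minimax-basis markings, since each explicit firing can replenish tokens in the implicit subnet and so a cycle does not force unbounded implicit firing. The missing idea, then, is the lifting of a maximal explanation at $M$ to a maximal explanation at the covering minimax basis marking $M_b$; once you have it, the whole construction collapses to one application of Definition \ref{Def1}(b) followed by Proposition \ref{CoreTheorem}.
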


\begin{proof}
Due to Lemma \ref{NewLemma}, there exists a firing sequence $\sigma t$, where $\sigma \in T_I^{*}$ and $t\in T_E$, such that $M[\sigma t\rangle$, which implies that $\sigma\in \Sigma(M, t)$.
By Lemma \ref{NewLemma2}, there exists a maximal explanation $\sigma^{\prime}\in \Sigma_{\rm max}(M,t)$ such that $\varphi(\sigma^{\prime})\geq\varphi(\sigma)$.
Let $\varphi(\sigma^{\prime})-\varphi(\sigma)=\textbf{y}$ and $M[\sigma^{\prime} t\rangle M^{\prime}$, by Definition \ref{Def1}, $M^{\prime}\in\mathcal{M_{B_M}}$.
According to Proposition \ref{CoreTheorem}, there exists a firing sequence $\sigma^{\prime\prime}\in\varphi^{-1}(\textbf{y})$ such that $M[\sigma t\sigma^{\prime\prime}\rangle M^{\prime}$, which implies that $M^{\prime}\in R(N,M)$.
\end{proof}

\begin{theorem}\label{Theoremminimax}
{\rm A bounded deadlock-free plant $G=(N, M_0, \mathcal{M_F})$ is nonblocking if and only if its minimax-BRG $\mathcal{B_M}$ is unobstructed.}
\end{theorem}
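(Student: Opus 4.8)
The plan is to prove both directions by connecting the nonblockingness of arbitrary reachable markings (Definition \ref{NB}) to the unobstructiveness of $\mathcal{B_M}$ (Definition \ref{ReDef2}), using Proposition \ref{Pro001} as the bridge. Recall that Proposition \ref{Pro001} already states: $\mathcal{B_M}$ is unobstructed if and only if all minimax basis markings are nonblocking. So it suffices to show that, for a bounded deadlock-free plant, $G$ is nonblocking if and only if all minimax basis markings are nonblocking. The ``only if'' direction of this equivalence is trivial, since $\mathcal{M_{B_M}}\subseteq R(N,M_0)$: if every reachable marking is nonblocking then in particular every minimax basis marking is.

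The substance is the ``if'' direction: assuming every minimax basis marking is nonblocking, I must show every reachable marking $M\in R(N,M_0)$ is nonblocking. This is exactly where the deadlock-free hypothesis and Lemma \ref{NewLemma3} come in. First I would invoke Lemma \ref{NewLemma3}: since $G$ is bounded and deadlock-free, for the given $M\in R(N,M_0)$ there exists a minimax basis marking $M_b\in\mathcal{M_{B_M}}$ with $M_b\in R(N,M)$. By the standing hypothesis $M_b$ is nonblocking, so there is a firing sequence leading from $M_b$ to some $M_f\in\mathcal{M_F}$; concatenating the sequence $M[\cdot\rangle M_b$ with that one shows $M$ is nonblocking. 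Hence all reachable markings are nonblocking, i.e., $G$ is nonblocking. Combining with Proposition \ref{Pro001} closes the argument: $G$ nonblocking $\iff$ all minimax basis markings nonblocking $\iff$ $\mathcal{B_M}$ unobstructed.

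I expect the main obstacle to be nothing new in this theorem itself — the real work has been front-loaded into Lemma \ref{NewLemma3} (which rests on Lemmas \ref{NewLemma}, \ref{NewLemma2} and Proposition \ref{CoreTheorem}) and into Proposition \ref{Pro001}. So the proof of Theorem \ref{Theoremminimax} is essentially a short assembly: one citation of the trivial inclusion for ``only if'', one application of Lemma \ref{NewLemma3} plus the definition of nonblocking for ``if'', and one citation of Proposition \ref{Pro001} to translate ``all minimax basis markings nonblocking'' into ``$\mathcal{B_M}$ unobstructed''. The only point requiring a little care is making sure the reachability relations compose correctly (that $M_b\in R(N,M)$ together with $M_b$ reaching a final marking yields $M$ reaching a final marking), which is immediate from transitivity of reachability. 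If I wanted to avoid leaning on Proposition \ref{Pro001} I could instead argue directly: ``$\Leftarrow$'' by the Lemma \ref{NewLemma3} argument above and ``$\Rightarrow$'' by contradiction exactly as in the ``if'' part of Proposition \ref{Pro001}'s proof (reduce an arbitrary witnessing sequence to one alternating minimal explanations and explicit transitions, landing in $\mathcal{M_{\rm i_{co}}}$), but routing through Proposition \ref{Pro001} is cleaner.
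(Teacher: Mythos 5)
Your proposal is correct and follows essentially the same route as the paper: the ``only if'' direction is the trivial inclusion $\mathcal{M_{B_M}}\subseteq R(N,M_0)$ combined with Proposition~\ref{Pro001}, and the ``if'' direction applies Lemma~\ref{NewLemma3} to reach a nonblocking minimax basis marking from an arbitrary reachable marking and then composes reachability. No gaps; the paper's proof is the same short assembly you describe.
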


\begin{proof}
(only if) Since the net is nonblocking, all reachable markings, including all minimax basis markings, are nonblocking. By Proposition \ref{Pro001}, its minimax-BRG $\mathcal{B_M}$ is unobstructed.

(if) Consider an arbitrary marking $M\in R(N, M_0)$. By Lemma \ref{NewLemma3}, there exists a minimax basis marking $M_b\in \mathcal{M_{B_M}}$ such that $M_b\in R(N, M)$, i.e., there exists a firing sequence $\sigma\in T^*$ such that $M[\sigma\rangle M_b$.
Since the minimax BRG $\mathcal{B_M}$ is unobstructed, according to Proposition \ref{Pro001}, all minimax basis markings including $M_b$ are nonblocking, which implies that marking $M$ is co-reachable to a nonblocking marking.
Hence, $G$ is nonblocking.
\end{proof}

By Theorem \ref{Theoremminimax}, for a deadlock-free net, one can use an arbitrary basis partition to construct the minimax-BRG to verify its nonblockingness. Since the existence of a livelock component that contains all blocking markings implies the existence of at least a blocking minimax basis marking $M_b$ in $\mathcal{B_M}$, the potential livelock problem mentioned in Section \ref{sec2} is avoided.

\subsection{Verifying Nonblockingness of Plants with Deadlocks}\label{secVD}

In this subsection, we generalize the results in Section \ref{DeadlockFree} to systems that are not deadlock-free.
Notice that a dead marking $M\in R(N, M_0)$ can either be non-final (i.e., $M\notin \mathcal{M_F}$) or final (i.e., $M\in\mathcal{M_F}$).

\begin{theorem}\label{0904Th}
{\rm A bounded plant $G=(N, M_0, \mathcal{M_F})$ is nonblocking if and only if its minimax-BRG $\mathcal{B_{M}}$ is unobstructed and all its dead markings are final.}
\end{theorem}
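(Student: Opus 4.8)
The plan is to mirror the structure of the proof of Theorem~\ref{Theoremminimax}, adding the treatment of deadlocks. For the \emph{only if} direction, assume $G$ is nonblocking. Every reachable marking, hence every minimax basis marking, is nonblocking, so by Proposition~\ref{Pro001} the minimax-BRG $\mathcal{B_M}$ is unobstructed. If some $M\in R(N,M_0)$ were dead and non-final, then $R(N,M)=\{M\}$ would be disjoint from $\mathcal{M_F}$, so $M$ would be blocking by Definition~\ref{NB}, a contradiction; hence every dead marking is final. This direction is routine.

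For the \emph{if} direction, assume $\mathcal{B_M}$ is unobstructed and every dead marking is final, and fix an arbitrary $M\in R(N,M_0)$; I would show $M$ is nonblocking by splitting on whether the net can fire some explicit transition starting from $M$ using implicit transitions only. \emph{Case~A:} there exist $\sigma\in T_I^*$ and $t\in T_E$ with $M[\sigma t\rangle$. The idea is to reuse the argument of Lemma~\ref{NewLemma3}, observing that deadlock-freeness enters that proof only through Lemma~\ref{NewLemma}, whose conclusion is precisely the Case~A hypothesis, while Lemma~\ref{NewLemma2} and Proposition~\ref{CoreTheorem} hold for every reachable marking without any deadlock-freeness assumption. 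Concretely, by Proposition~\ref{SRGnew} there is $M_b\in\mathcal{M_{B_M}}$ with $M_b[\sigma_I\rangle M$ for some $\sigma_I\in T_I^*$; then $\sigma_I\sigma\in\Sigma(M_b,t)$, and by Lemma~\ref{NewLemma2} there is a maximal explanation $\rho\in\Sigma_{\rm max}(M_b,t)$ with $\varphi(\rho)\geq\varphi(\sigma_I)+\varphi(\sigma)$; letting $M_b[\rho t\rangle M^*$, Definition~\ref{Def1} gives $M^*\in\mathcal{M_{B_M}}$, and Proposition~\ref{CoreTheorem} (applied at $M_b$ with the explanations $\rho$ and $\sigma_I\sigma$) yields $M^*\in R(N,M)$. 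Since $\mathcal{B_M}$ is unobstructed, $M^*$ is nonblocking by Proposition~\ref{Pro001}, hence so is $M$.

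\emph{Case~B:} no such $\sigma,t$ exist. Then, exactly as in the proof of Lemma~\ref{NewLemma}, every maximal implicit firing sequence from $M$ has finite length (the $T_I$-induced sub-net is bounded and acyclic), and it reaches a marking $M_d$ that enables no implicit transition by maximality and no explicit transition by the Case~B hypothesis; hence $M_d$ is dead, so by assumption $M_d\in\mathcal{M_F}$, and $M_d\in R(N,M)\cap\mathcal{M_F}$ shows $M$ is nonblocking. As $M$ was arbitrary, $G$ is nonblocking, completing the proof.

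The main obstacle is Case~A. One cannot simply apply Definition~\ref{Def1} at the (generally non-basis) marking $M$ to conclude that the marking reached along a maximal explanation lies in $\mathcal{M_{B_M}}$; the detour through a minimax basis marking $M_b$ with $M\in R_I(M_b)$, furnished by Proposition~\ref{SRGnew}, is what makes this step legitimate. One must also check carefully that Lemma~\ref{NewLemma2} and Proposition~\ref{CoreTheorem} are genuinely independent of deadlock-freeness, so that the reasoning of Lemma~\ref{NewLemma3} transplants to Case~A. Case~B, although the truly new ingredient compared with Theorem~\ref{Theoremminimax}, is straightforward once one notes that a net trapped in implicit transitions only must eventually deadlock.
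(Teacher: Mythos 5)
Your proof is correct and follows essentially the same route as the paper's: both directions rest on Proposition~\ref{Pro001}, and the backward direction combines Proposition~\ref{SRGnew}, the maximal-explanation argument of Lemma~\ref{NewLemma2} together with Proposition~\ref{CoreTheorem} to reach a nonblocking minimax basis marking, and the Lemma~\ref{NewLemma}-style argument that a net confined to implicit transitions must reach a (hence final) dead marking. The only difference is presentational: you argue directly by a two-case split, whereas the paper runs the same two cases inside a proof by contradiction; your version also spells out more explicitly why Lemma~\ref{NewLemma2} and Proposition~\ref{CoreTheorem} do not need deadlock-freeness, which the paper uses tacitly.
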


\begin{proof}
(only if) When all reachable markings are nonblocking, all dead markings (if any exists) and all minimax basis markings are also nonblocking. Hence, all dead markings are final and by Proposition \ref{Pro001}, the minimax-BRG $\mathcal{B_M}$ is unobstructed.

(if)
If the minimax-BRG $\mathcal{B_{M}}$ is unobstructed, all minimax basis markings are nonblocking, by Proposition \ref{Pro001}.
%For a marking $M\in R(N, M_0)$, according to Lemma \ref{NewLemma2}, there exists a firing sequence $\sigma t$ where $\sigma\in T_I^{*}$ and $t\in T_E$, such that $M[\sigma t\rangle M^{\prime}$ or there exists a firing sequence $\sigma^{\prime}\in T_I^{*}$, such that $M[\sigma^{\prime}\rangle M^{\prime\prime}$ and there does not exist a transition $t^{\prime}\in T_E$ such that $M^{\prime\prime}[t^{\prime}\rangle$.
Consider an arbitrary marking $M\in R(N, M_0)$. By Proposition \ref{SRGnew}, there exist a minimax basis marking $M_b\in\mathcal{M_{B_M}}$ in the minimax-BRG of the system and an implicit firing sequence $\sigma_I\in T_I^{*}$ such that $M_b[\sigma_I\rangle M$.

We prove that marking $M$ is nonblocking by contradiction. In fact, if we assume that $M$ is blocking, since all dead markings are final, $M$ is neither dead nor co-reachable to a deadlock in the system.
%Hence, there exists a firing sequence $\sigma\in T^*$ such that $M[\sigma\rangle$.
Suppose that from $M$ no explicit transition can eventually fire: following the argument of the proof of Lemma \ref{NewLemma}, a dead marking will be reached, leading to a contradiction. Therefore, there exist $\sigma_I^{\prime}\in T_I^*$ and $t\in T_E$ such that $M[\sigma_I^{\prime}t\rangle$ and thus $M_b[\sigma_I\sigma_I^{\prime}t\rangle$. Also, there exists a maximal explanation $\sigma^{\prime}\in\Sigma_{\rm max}(M_b, t)$ such that $\varphi(\sigma^{\prime})\geq\varphi(\sigma_I\sigma_I^{\prime})$. According to Proposition \ref{CoreTheorem}, it follows that $M$ is co-reachable to a minimax basis marking, which implies that $M$ is nonblocking, another contradiction, which concludes the proof.
\end{proof}
According to Theorem \ref{0904Th}, determining the nonblockingness of a plant $G$ can be addressed by two steps:
%\begin{enumerate}
%  \item determine if there exists a reachable non-final dead marking. If yes, then $G$ is blocking; otherwise go to step 2;
%  \item determine if a minimax BRG of $G$ is unobstructed. If yes,
%\end{enumerate}
(1) determine if there exists a reachable non-final dead marking; if not, then (2) determine the unobstructiveness of a minimax-BRG of it.
%
%************************************************************************************************************************
%************************************************************************************************************************
%{\red Note that in step (1), a non-final dead marking can be either a minimax basis marking or a non-minimax-basis one.
%Since a minimax basis marking can be verified in step (2), for step (1), we aim to compute the set of \textit{non-minimax-basis dead markings} $\mathcal{D}$ (defined in Definition \ref{Defnmbd}) as a preliminary step. If set $\mathcal{D}$ including non-final markings, the system is blocking; otherwise step (2) should be further executed.}
%************************************************************************************************************************
%************************************************************************************************************************
%{\red Thanks to the implication of deadlock characterization in acyclic nets presented in Section \ref{NewSub0208}, we introduce how to determine the existence of non-final dead markings by using minimax-BRG in Section \ref{Sub5.1}.}

Since step (2) has already been discussed in the previous section, we only need to study step (1).
Next, we show how to determine the existence of non-final dead markings by using the minimax-BRG.
Denote the set of non-final dead markings as $\mathcal{D}_{\rm nf}$.
Then, we define the set of \textit{maximal implicit firing sequences} and the corresponding set of vectors as follows.

\begin{definition}\label{1220new}
{\rm Given a bounded marked net $\langle N, M_0\rangle$ with basis partition $\pi=(T_E, T_I)$ and a marking $M\in R(N, M_0)$, we define
\begin{center}
$\Sigma_{\rm I, max}(M)=\{\sigma\in T_I^*| (M[\sigma\rangle)\wedge(\nexists \sigma^{\prime}\in T_I^*: M[\sigma^{\prime}\rangle, \varphi(\sigma^{\prime})\gneqq \varphi(\sigma))\}$
%\forall t\in T_E, \sigma\notin \Sigma_{{\rm max}}(M, t)\}$
\end{center}
as the set of maximal implicit firing sequences at $M$, and
\begin{center}
$Y_{\rm I, max}(M)=\{\varphi(\sigma)\in \mathbb{N}^{n_{I}}| \sigma\in \Sigma_{\rm I, max}(M)\}$
\end{center}
as the corresponding set of maximal implicit firing vectors.$\hfill\square$}
\end{definition}

\begin{proposition}\label{Prop0121}
{\rm Given a bounded marked net $\langle N, M_0\rangle$ with basis partition $\pi=(T_E, T_I)$, let $\mathcal{M_{B_M}}$ be its minimax basis marking set.
Marking $M\in R(N, M_0)$ is dead if and only if there exist $M_b\in\mathcal{M_{B_M}}$ and $\sigma\in\Sigma_{\rm I, max}(M_b)$ such that for all $t\in T_E, \sigma\notin \Sigma_{{\rm max}}(M_b, t)$ and $M_b[\sigma\rangle M.$}
%there exists a minimax-basis marking $M_b\in\mathcal{M_{B_M}}$ and a maximal implicit firing sequence $\sigma\in \Sigma_{\rm I, max}(M_b)$ such that $M_b[\sigma\rangle M$ holds.}
\end{proposition}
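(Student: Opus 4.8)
The plan is to prove both directions by combining Proposition \ref{SRGnew} (every reachable marking lies in the implicit reach of some minimax basis marking) with a careful analysis of what it means for a marking to be dead when it is reached from a minimax basis marking via implicit transitions only. The key observation to exploit is that a marking $M$ is dead precisely when (i) no implicit transition is enabled at $M$, i.e.\ the implicit firing sequence leading to $M$ cannot be prolonged, and (ii) no explicit transition is enabled at $M$ either.

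For the \emph{only if} direction, assume $M\in R(N,M_0)$ is dead. By Proposition \ref{SRGnew} there is a minimax basis marking $M_b\in\mathcal{M_{B_M}}$ with $M\in R_I(M_b)$, hence a sequence $\sigma\in T_I^*$ with $M_b[\sigma\rangle M$. First I would argue $\sigma\in\Sigma_{\rm I,max}(M_b)$: if not, then by Definition \ref{1220new} some $\sigma'\in T_I^*$ with $M_b[\sigma'\rangle$ has $\varphi(\sigma')\gneqq\varphi(\sigma)$, and using Proposition \ref{ProX} applied to the acyclic $T_I$-induced sub-net one shows $\tilde{\textbf{y}}:=\varphi(\sigma')-\varphi(\sigma)\ge\textbf{0}$ is realizable from $M$, so some implicit transition is enabled at $M$, contradicting deadness. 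Next I would show that for every $t\in T_E$, $\sigma\notin\Sigma_{\rm max}(M_b,t)$: indeed, if $\sigma\in\Sigma_{\rm max}(M_b,t)$ then by the definition of an explanation $M=M_b+C_I\varphi(\sigma)$ would satisfy $M\ge Pre(\cdot,t)$, i.e.\ $t$ would be enabled at $M$, again contradicting deadness. This gives the claimed $M_b$ and $\sigma$.

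For the \emph{if} direction, suppose $M_b\in\mathcal{M_{B_M}}$ and $\sigma\in\Sigma_{\rm I,max}(M_b)$ satisfy $M_b[\sigma\rangle M$ and $\sigma\notin\Sigma_{\rm max}(M_b,t)$ for all $t\in T_E$. I must show $M$ is dead, i.e.\ no transition is enabled at $M$. For implicit transitions: if some $t_i\in T_I$ were enabled at $M$, then $\sigma t_i$ would be an implicit firing sequence at $M_b$ with firing vector strictly dominating $\varphi(\sigma)$, contradicting $\sigma\in\Sigma_{\rm I,max}(M_b)$. For explicit transitions: suppose some $t\in T_E$ is enabled at $M$; then $M\ge Pre(\cdot,t)$, so $\sigma\in\Sigma(M_b,t)$ by definition of an explanation. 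By Lemma \ref{NewLemma2}'s argument — or more directly by Definition \ref{DEFMAX} together with $\sigma\in\Sigma_{\rm I,max}(M_b)$ — any explanation in $\Sigma(M_b,t)$ dominating $\sigma$ must equal $\sigma$ in firing vector, since $\sigma$ is already maximal among \emph{all} implicit sequences; hence $\sigma\in\Sigma_{\rm max}(M_b,t)$, contradicting the hypothesis. Therefore no transition is enabled at $M$ and $M$ is dead.

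The main obstacle I anticipate is the \emph{if}-direction step ruling out enabled explicit transitions: one has to be careful that maximality of $\sigma$ \emph{within the purely implicit dynamics} ($\Sigma_{\rm I,max}(M_b)$) correctly implies maximality of $\sigma$ \emph{as an explanation of $t$} ($\Sigma_{\rm max}(M_b,t)$) whenever $t$ happens to be enabled at $M$. The point is that $\Sigma(M_b,t)\subseteq T_I^*$, so every explanation of $t$ is in particular an implicit firing sequence at $M_b$; thus nothing in $\Sigma(M_b,t)$ can strictly dominate the globally $T_I$-maximal $\sigma$, which forces $\sigma$ to be maximal in the smaller set $\Sigma(M_b,t)$ as well. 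Once this containment is made explicit the argument closes cleanly; the remaining steps are routine applications of Proposition \ref{SRGnew}, Proposition \ref{ProX}, and the definitions.
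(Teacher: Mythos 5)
Your proof is correct and follows essentially the same route as the paper's, which likewise reduces deadness to the two claims that no implicit and no explicit transition is enabled at $M$ via the maximality of $\sigma$ in $\Sigma_{\rm I,max}(M_b)$ and its non-membership in each $\Sigma_{\rm max}(M_b,t)$. You actually supply details the paper leaves implicit — the appeal to Proposition \ref{SRGnew} to obtain $M_b$ in the only-if direction, and the observation that $\Sigma(M_b,t)\subseteq T_I^*$ forces a globally $T_I$-maximal $\sigma$ to be maximal as an explanation — so your write-up is a more complete version of the same argument.
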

%\begin{corollary}\label{Prop0121}
%{\rm Given a bounded marked net $\langle N, M_0\rangle$ with basis partition $\pi=(T_E, T_I)$, let $\mathcal{M_{B_M}}$ be its minimax basis marking set.
%{\red Marking $M\in R(N, M_0)\setminus\mathcal{M_{B_M}}$ is dead if and only if there exist $M_b\in\mathcal{M_{B_M}}$ and $\sigma\in\Sigma_{\rm I, max}(M_b)$ such that for all $t\in T_E, \sigma\notin \Sigma_{{\rm max}}(M_b, t)$ and $M_b[\sigma\rangle M.$}}
%%there exists a minimax-basis marking $M_b\in\mathcal{M_{B_M}}$ and a maximal implicit firing sequence $\sigma\in \Sigma_{\rm I, max}(M_b)$ such that $M_b[\sigma\rangle M$ holds.}
%\end{corollary}

\begin{proof}
(if) Since $\sigma\in \Sigma_{\rm I, max}(M_b)$, there does not exist an implicit transition $t_I\in T_I$ such that $M[t_I\rangle$. On the other hand, since for all $t\in T_E, \sigma\notin \Sigma_{{\rm max}}(M_b, t)$, i.e., there does not exist an explicit transition $t^{\prime}\in T_E$ such that $M[t^{\prime}\rangle$, which implies that $M$ is dead.

(only if) Since $M$ is dead, there does not exist $t\in T$ such that $M[t\rangle$.
Therefore, there exists $\sigma\in\Sigma_{\rm I, max}(M_b)$ such that for all $t\in T_E, \sigma\notin \Sigma_{{\rm max}}(M_b, t)$ and $M_b[\sigma\rangle M$.
\end{proof}
%\begin{proof}
%%This statement follows from Propositions \ref{SRGnew} and \ref{Lemma0207}.
%{\red Trivially holds based on Propositions \ref{SRGnew} and \ref{Lemma0207}.}
%\end{proof}
Proposition \ref{Prop0121} shows the relation between dead markings and minimax basis markings in a bounded system, i.e., all reachable dead markings can be obtained by firing a maximal implicit firing sequence $\sigma$ from a minimax basis marking $M_b$ where for all $t\in T_E$, $\sigma$ is not a maximal explanation of $t$. %by first computing all minimax basis markings and then for all $M_b\in\mathcal{M_{B_M}}$, we compute the set of firing maximal implicit sequence vectors $\textbf{y}$ with for all $t\in T_E$, $\textbf{y}\notin Y_{{\rm max}}(M_b, t)$. As a result, $M+C_I\cdot\textbf{y}$ is a non-minimax-basis dead marking.
%*****************************************************************************************************
%However, the set of dead markings that are minimax-basis may be omitted: suppose that $M_b^{\prime}\in\mathcal{M_{B_M}}$ is dead in a bounded system, then $\Sigma_{\rm I, max}(M_b^{\prime})=\emptyset$ and for all $t\in T_E, \Sigma_{{\rm max}}(M_b^{\prime}, t)=\emptyset$.
%*****************************************************************************************************
%Hence, we introduce Algorithm \ref{Algonew} to compute not only non-minimax-basis but also minimax-basis dead markings and determine the if $\mathcal{D}_{\rm nf}=\emptyset$ in a plant.
Next, we introduce Algorithm \ref{Algonew} to verify if there exist non-final dead markings in a plant.

\begin{algorithm}[t]
%*******************************************************************************************
%************************Non final deadlock determination algorithm*************************
%*******************************************************************************************
\caption{Verification of $\mathcal{D}_{\rm nf}$} %
\begin{algorithmic}[1]
\REQUIRE A bounded plant $(N, M_0, \mathcal{M_F})$ with $\pi=(T_E ,T_I)$ and its minimax basis marking set $\mathcal{M_{B_M}}$
\ENSURE ``$\mathcal{D}_{\rm nf}=\emptyset$''$\slash$ ``$\mathcal{D}_{\rm nf}\neq\emptyset$''
\STATE {$\mathcal{D}_{\rm nf}:=\emptyset$, $T^{\prime}:=T\cup\{t_0\}$ and $T_E^{\prime}:=T_E\cup\{t_0\}$};
%\STATE Let $Pre(\cdot, t_0)=Post(\cdot, t_0)=\textbf{0}$, $Pre^{\prime}=[Pre(\cdot, t_0);Pre]$, and $Post^{\prime}=[Post(\cdot, t_0);Post]$;
\STATE {$Pre^{\prime}:=[\textbf{0};Pre]$ and $Post^{\prime}:=[\textbf{0};Post]$};
\STATE {$N^{\prime}:=(P, T^{\prime}, Pre^{\prime}, Post^{\prime})$ and $\pi^{\prime}:=(T_E^{\prime} ,T_I)$};
\STATE Construct a bounded plant $(N^{\prime}, M_0, \mathcal{M_F})$ with basis partition $\pi^{\prime}$;
\FORALL {$M\in\mathcal{M_{B_M}}$,}
\FORALL {$\textbf{y}\in Y_{\rm max}(M, t_0)$,}
\STATE {$M^{\prime}:=M+C_I\cdot \textbf{y}$};
\IF {$M^{\prime}$ is dead and $M^{\prime}\notin \mathcal{M_F}$,}
\STATE {$\mathcal{D}_{\rm nf}:=\mathcal{D}_{\rm nf}\cup\{M^{\prime}\}$};
\STATE Output ``$\mathcal{D}_{\rm nf}\neq\emptyset$'' and Return;
\ENDIF
\ENDFOR
\ENDFOR
\IF {$\mathcal{D}_{\rm nf}=\emptyset$,}
\STATE {Output ``$\mathcal{D}_{\rm nf}=\emptyset$'' and Return.}
\ENDIF
\end{algorithmic}\label{Algonew}
\end{algorithm}

\begin{table*}[t]
\caption{Analysis of the reachability graph, expanded BRG from \cite{Gu} and minimax-BRG for the plant in Fig. \ref{FMS} with $T_E=\{t_3, t_6, t_{11}, t_{13}\}$.}\label{tablenew1}
\scalebox{0.66}{
\begin{threeparttable}
\begin{tabular}{c|c|c||c|c||c|c||c|c|c|c|c|c||c||c||c}
\toprule[1pt]
  Run & $\lambda$ & $\mu$ & $|R(N, M_0)|$ & Time\ (s) & $|\mathcal{M_{B_E}}|$ & Time\ (s) & $|\mathcal{M_{B_M}}|$ & Time\ (s) & $\mathcal{D}_{\rm nf}=\emptyset$? & Time\ (s) & Unobstructed? & Time\ (s) & NB? & $|\mathcal{M_{B_M}}|/|\mathcal{M_{B_E}}|$ & $|\mathcal{M_{B_M}}|/|R(N, M_0)|$\\
  \hline
  1 & 5 & 1 & 102 & $<1$ & 31 & $0.2$ & 11 & 0.04 & Yes & 0.03 & Yes & 1.9 & Yes & $35.5\%$ & $10.8\%$\\
  \hline
  2 & 5 & 2 & 384 & 1 & 191 & 0.7 & 37 & 0.2 & Yes & 0.1 & Yes & 6 & Yes & $19.3\%$ & $9.6\%$\\
  \hline
  3 & 5 & 3 & 688 & 2 & 405 & 1 & 68 & 0.4 & Yes & 0.4 & Yes & 12 & Yes & $16.8\%$ & $9.9\%$\\
  \hline
  4 & 6 & 1 & 840 & 4 & 449 & 2 & 81 & 0.5 & Yes & 0.5 & Yes & 15 & Yes & $18.0\%$ & $9.6\%$\\
  \hline
  5 & 6 & 2 & 12066 & 431 & 9117 & 302 & 1171 & 23 & Yes & 16 & No & 251 & No & $12.8\%$ & $9.7\%$\\
%  \hline
%  6 & 6 & 3 & 88681 & 24354 & 75378 & 20944 & 9985 & 833 & Yes & 249 & - & - & No & $13.2\%$ & $11.3\%$\\
    \hline
  6 & 6 & 3 & 88681 & 24354 & 75378 & 20944 & 9985 & 833 & No & 58 & - & - & No & $13.2\%$ & $11.3\%$\\
  \hline
  7 & 6 & 4 & - & o.t. & - & o.t. & 22095 & 4517 & Yes & 1099 & Yes & 3502 & Yes & - & - \\
  \hline
  8 & 6 & 5 & - & o.t. & - & o.t. & 31147 & 10082 & No & 955 & - & - & No & - & - \\
  \hline
  9 & 6 & 6 & - & o.t. & - & o.t. & 41817 & 18295 & No & 1618 & - & - & No & - & - \\
  \hline
  10 & 6 & 7 & - & o.t. & - & o.t. & 45458 & 21229 & No & 1754 & - & - & No & - & - \\
%Time expiration data    8 & 7 & 5 & - & o.t. & - & o.t. & 39781 & o.t. & - & - & -\\
  \bottomrule[1pt]
\end{tabular}
\begin{tablenotes}
        \item[*] The computing time is denoted by \textit{overtime} (o.t.) if the program does not terminate within 28,800 seconds (8 hours).
      \end{tablenotes}
\end{threeparttable}}
\end{table*}

In Algorithm \ref{Algonew}, first, from lines 1$\--$4, we add an explicit transition $t_0$ to $N$ with $Pre(\cdot, t_0)=Post(\cdot, t_0)=\textbf{0}$ and derive a new plant $(N^{\prime}, M_0, \mathcal{M_F})$. Obviously, $t_0$ is enabled from any reachable marking and, since its firing does not modify the marking, it holds that $R(N, M_0)=R(N^{\prime}, M_0)$.
Hence, for all $M_b\in\mathcal{M_{B_M}}$, we conclude that $Y_{\rm I, max}(M_b)=Y_{\rm max}(M_b, t_0)$, i.e., the set of maximal implicit firing vectors at $M_b$ can be determined by computing maximal explanation of $t_0$ at $M_b$ based on Algorithm \ref{AlgoMax}.

Then, we determine if, for all $t\in T_E$, the obtained firing vector $\textbf{y}\in Y_{\rm I, max}(M_b)$ is not an explanation of $t$ at $M_b$. Implemented in lines 5$\--$16, this consists in checking if, for all $t\in T_E$, $t$ is disabled at marking $M^{\prime}=M_b+C_I\cdot\textbf{y}$: since no implicit transition can fire at $M^{\prime}$, the only transitions that can possibly fire are those explicit ones.
If no explicit transition is enabled at $M^{\prime}$, according to Proposition \ref{Prop0121}, marking $M^{\prime}$ is dead.
Further, $M^{\prime}$ will be added into the set $\mathcal{D}_{\rm nf}$ if it is dead and not final.
Note that Algorithm \ref{Algonew} also tests if a minimax basis marking $M_b^{\prime}\in\mathcal{M_{B_M}}$ is dead. Since $Y_{\rm max}(M_b^{\prime}, t_0)=\{\textbf{0}\}$ and for all $t\in (T_E^{\prime}\setminus\{t_0\})$, $Y_{\rm max}(M_b^{\prime}, t)=\emptyset$, $M_b^{\prime}$ will be added to $\mathcal{D}_{\rm nf}$ if it is not final.
When the algorithm terminates, if $\mathcal{D}_{\rm nf}\neq\emptyset$, we conclude that the plant is blocking; otherwise, the unobstructiveness verification procedure (mentioned in Section \ref{sec2.1}) of the minimax-BRG should be further executed.

The complexity of Algorithm \ref{Algonew} depends on the two \textit{for} loops (lines 5$\--$13).
 First, there are $|\mathcal{M_{B_M}}|$ and $|Y_{\rm max}(M, t_0)|$ iterations in lines 5 and 6, respectively.
  In line 8, to verify $M^{\prime}$ is dead, one may need to test if $M^{\prime}\ngeq Pre^{\prime}(\cdot, t)$ for all $t\in T_E$ (no need to test transitions in $T_I$ since no implicit transition is enabled at $M^{\prime}$), which requires $|T_E|$ iterations.
  In summary, the worst-case time complexity of Algorithm \ref{Algonew} is $\mathcal{O}(|\mathcal{M_{B_M}}|\cdot|Y_{\rm max}(M, t_0)|\cdot|T_E|)$.

\section{Case Studyz}\label{sec2.3}

\begin{figure}[t]
\begin{center}
\includegraphics[width=7cm]{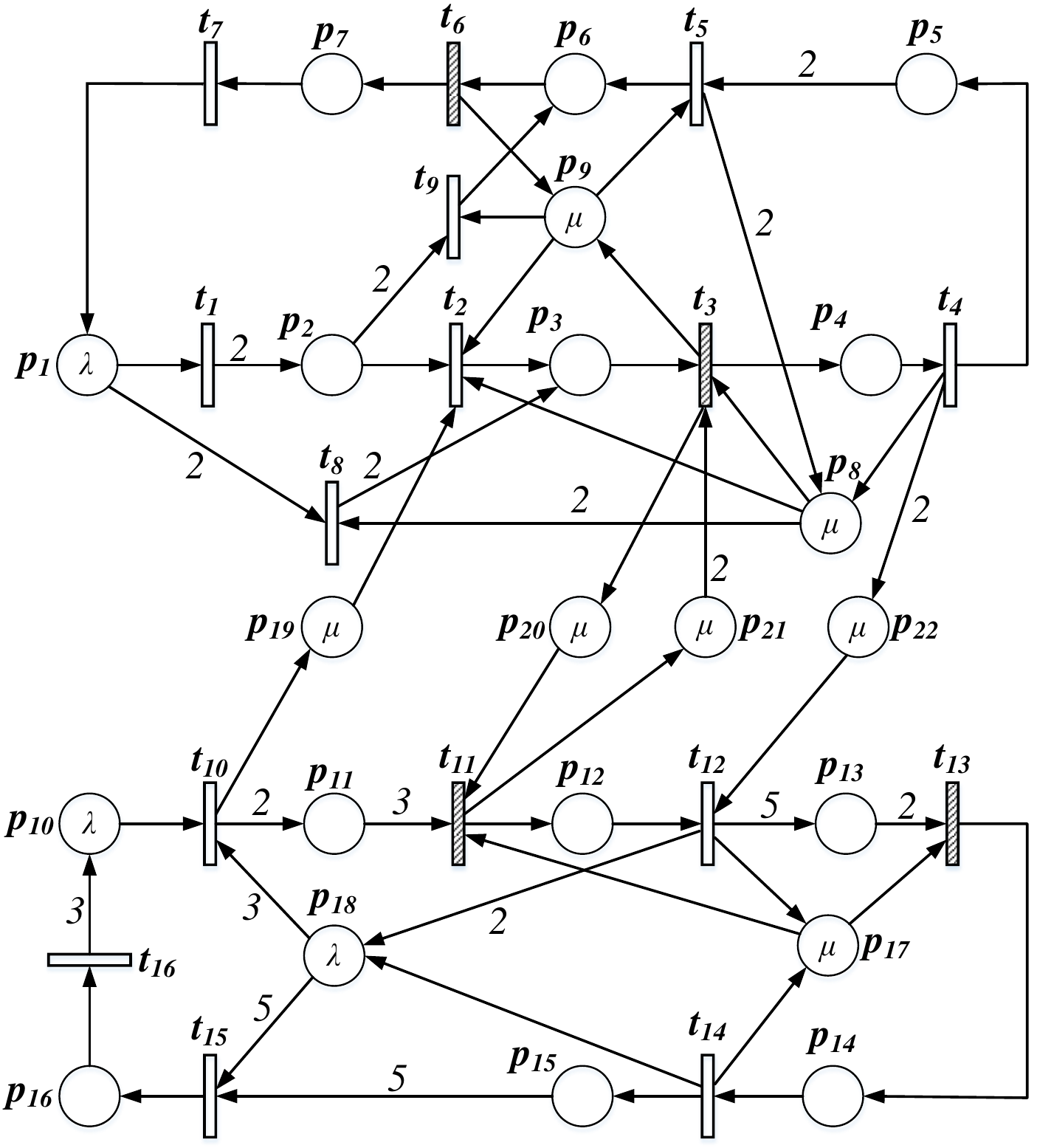}
\caption{A parameterized manufacturing example.}\label{FMS}
\end{center}
\end{figure}

%The second benchmark shows the scalability of our approach.
%\begin{table}[b]
%\caption{Computation of non-final dead markings and total time of nonblockingness determination of the system in Fig. \ref{FMS}.}\label{tablenew2}
%\scalebox{0.65}{
%\begin{threeparttable}
%\begin{tabular}{c|c|c||c|c||c||c}
%\toprule[1pt]
%  Run & $\lambda$ & $\mu$ & $|\mathcal{D}_{\rm nf}|$ & Time\ (s) & Total Time\ (s)  & Blocking?\\
%  \hline
%  1 & 5 & 1 & 1 & 0.04 & 0.08 & Yes\\ %%FUll Time ILPP: 7
%  \hline
%  2 & 5 & 2 & 1 & 0.17 & 0.37 & Yes\\ %%FUll Time ILPP: 16
%  \hline
%  3 & 5 & 3 & 1 & 0.37 & 0.77 & Yes\\   %%FUll Time ILPP: 26
%  \hline
%  4 & 6 & 1 & 1 & 0.5 & 1 & Yes\\  %%FUll Time ILPP: 32
%  \hline
%  5 & 6 & 2 & 2 & 16 & 39 & Yes\\ %%FUll Time ILPP: 449
%  \hline
%  6 & 6 & 3 & 4 & 246 & 1079 & Yes\\ % Time(\mathcal{M_E}): 65.67   %%FUll Time ILPP: 3608
%  \hline
%  7 & 6 & 4 & 6 & 1123 & 5640 & Yes\\  % Time(\mathcal{M_E}): 132   %%FUll Time ILPP: 8304 %% Sort: 20455
%  \hline
%  8 & 6 & 5 & 11 & 2030 & 12112 & Yes\\
%  \hline
%  9 & 6 & 6 & 18 & 3495 & 21790 & Yes\\  % Time(\mathcal{M_E}): 142
%  \hline
%  10 & 6 & 7 & 20 & 3978 & 25207 & Yes\\
%  \bottomrule[1pt]
%\end{tabular}
%%\begin{tablenotes}
%%        \item[*] The computing time is denoted by \textit{overtime} (o.t.) if the program does not terminate within 28,800 seconds (8 hours).
%%      \end{tablenotes}
%\end{threeparttable}
%}
%\end{table}

We use a parameterized plant (chosen from \cite{c3}) depicted in Fig. \ref{FMS} to test the efficacy and efficiency of our method in this section.
%Chosen from \cite{c3}, a marked net is depicted in Fig. \ref{FMS} as a parameterized net $\langle N, M_0\rangle$.
Let $M_0 = [\lambda\ 0\ 0\ 0\ 0\ 0\ 0\ \mu\ \mu\ \lambda\ 0\ 0\ 0\ 0\ 0\ 0\ \mu\ \lambda\ \mu\ \mu\ \mu\ \mu]^{\rm T}$.
%
%$M_0(p_1)=M_0(p_{10})=M_0(p_{18})=\lambda$ and $M_0(p_{8})=M_0(p_{9})=M_0(p_{17})=M_0(p_{19})=M_0(p_{20})=M_0(p_{21})=M_0(p_{22})=\mu$.
%Consider $T_E=\{t_3, t_6, t_{11}, t_{13}\}$ (marked as shadow bars) and $T_I=\{t_1, t_2, t_{4}, t_{5}, t_7, t_8, t_{9}, t_{10}, t_{12}, t_{14}, t_{15}, t_{16}\}$.
Consider $T_E=\{t_3, t_6, t_{11}, t_{13}\}$ (marked as shadow bars).
%Also, we set $\mathcal{M_F}=\mathcal{L}_{(\textbf{w},k)}=\{M\in\mathbb{N}^m| \textbf{w}^{\rm T}\cdot M \leq k\}$, where $\textbf{w}=[0\ 1\ 0\ 0\ 0\ 0\ 0\ 0\ 1\ 0\ 0\ 0\ 0\ 1\ 0\ 0\ 0\ 0\ 0\ 0\ 0\ 0]^{\rm T}$ and $k=\mu$, to test nonblockingness of this plant for all cases.
Also, we set $\mathcal{M_F}=\mathcal{L}_{(\textbf{w},k)}=\{M\in\mathbb{N}^m| \textbf{w}^{\rm T}\cdot M \leq k\}$, where $\textbf{w}=[0\ 0\ 0\ 0\ 0\ 0\ 0\ 0\ 0\ 0\ 0\ 1\ 1\ 1\ 1\ 1\ 0\ 0\ 0\ 0\ 0\ 0]^{\rm T}$ and $k=3$ (for run 4) or $k=4$ (for runs 8$\--$10) or $k=5$ (for runs 1$\--$3) or $k=7$ (for runs 5$\--$6) or $k = 15$ (for run 7), to test nonblockingness of this plant for all cases.

We run several simulations on a laptop with Intel i7-5500U 2.40 GHz processor and 8 GB RAM. Table \ref{tablenew1} shows, for different values of the parameters $\lambda$ and $\mu$, the sizes of the reachability graph $|R(N, M_0)|$, of the expanded BRG $|\mathcal{M_{B_E}}|$ \cite{Gu} and of minimax-BRG $|\mathcal{M_{B_M}}|$ as well as the time required to compute them. We also show the ratios of $|\mathcal{M_{B_M}}|$ to $|\mathcal{M_{B_E}}|$ and $|\mathcal{M_{B_M}}|$ to $|R(N, M_0)|$.
It can be verified that $|\mathcal{M_{B_M}}|\ll|\mathcal{M_{B_E}}|$ and $|\mathcal{M_{B_M}}|\ll|R(N, M_0)|$ in all cases.
%********************Analysis***********************
%{\blue Note that although the size of minimax-BRG grows more than linearly with the increasing of the system scale, it still does not grow exponentially as the size of the reachability graph with the increase of $\lambda$ and $\mu$, which depends on the net structure, initial resource distribution and choice of basis partition $\pi=(T_E, T_I)$.}
%***************************************************
Note that the size of minimax-BRG depends on the net structure, initial resource distribution and choice of basis partition $\pi=(T_E, T_I)$.
%{\blue Note that the size of minimax-BRG does not grow exponentially with the increase of $\lambda$ and $\mu$, which depends on the net structure, initial resource distribution and choice of basis partition $\pi=(T_E, T_I)$.}
%In Table \ref{tablenew2}, we show the performance of computing the set of non-final dead markings $\mathcal{D}_{\rm nf}$ based on Algorithm \ref{Algonew} and the total time required to determine nonblockingness (i.e., the sum of time to compute $\mathcal{M_{B_M}}$ and time to compute $\mathcal{D}_{\rm nf}$) in all cases.
%The cardinality of $\mathcal{D}_{\rm nf}$ and the time required to compute it for all cases are shown in columns 4$\--5$.
%Since $\mathcal{D}_{\rm nf}\neq\emptyset$ for all cases, we conclude in column 7 that the system is blocking in all cases.
%
%
%
%{\red Also in Table \ref{tablenew1}, we show the performance of determining if there exist non-final dead markings based on Algorithm \ref{Algonew} in columns 10$\--$11, and verifying unobstructiveness for all cases if necessary (with the usage of an ILPP solver namely \textit{LPSOVE} when computing $\mathcal{M_{\rm i_{co}}}$) in columns 12$\--$13.}
Also in Table \ref{tablenew1}, we show the simulation results of determining if there exist non-final dead markings based on Algorithm \ref{Algonew} (columns 10$\--$11), and verifying unobstructiveness (the set of i-coreachable markings $\mathcal{M_{\rm i_{co}}}$ of a minimax-BRG can be obtained by using two free MATLAB integer linear programming problems solver toolboxes namely YALMIP \cite{lofberg2004yalmip} and lpsolve \cite{berkelaar2004lpsolve}) for all cases if necessary (columns 12$\--$13).
%{\blue Also in Table \ref{tablenew1}, we show the performance of determining if there exist non-final dead markings based on Algorithm \ref{Algonew} in columns 10$\--$11, and verifying unobstructiveness for all cases if necessary in columns 12$\--$13. Note that the set of i-coreachable markings $\mathcal{M_{\rm i_{co}}}$ of a minimax-BRG can be obtained by leveraging two free MATLAB ILPP solver toolboxes namely YALMIP \cite{lofberg2004yalmip} and LPSOLVE.}
Moreover, the nonblockingness of the system for all cases are listed in column 14.
The test cases show that minimax-BRG-based technique achieves practical efficiency when coping with the NB-V problem in this considered case.
Additional case studies are also considered in \cite{GitHub}, which consists of three Petri net benchmarks taken from the literature.

\section{Discussions}
We propose the minimax-BRG to ensure that the essential features of a system, from which a blocking condition may originate, are captured in the abstracted model. As a non-trivial task, it is necessary to formally characterize and validate the proposed approach with a series of theoretical results.
When tackling the NB-V problem, the minimax-BRG-based approach is general and can be directly applied to arbitrary bounded plants (the only restriction is that the $T_I$-induced sub-net is acyclic). This is a major practical advantage with respect to other abstraction approaches that are based on particular structures or symmetries, and require significant analysis of the model in a preliminary stage before they can be applied.

Further, our numerical results (i.e., Section \ref{sec2.3} and \cite{GitHub}) show that the minimax-BRG can often be more compact in size than that of the reachability graph in the considered cases.
Accordingly, as a potential advantage, when it comes to a related problem of NB-V, i.e., \textit{nonblocking enforcement}, which consists of designing a \textit{supervisor} (an online control agent) to ensure that the controlled plant does not reach a blocking marking, a supervisor designed based on the minimax-BRG can also be more compact than that of a reachability-graph-based one.

%Note that the time required to detect whether those essential markings are non-final is negligible and, therefore, not listed.}
%As the scale of the system expands, the semi-structural method based on minimax-BRG is more efficient than the methods based on RG and expanded BRG.
%
%%On the other hand, the complexity of solving ILPP (\ref{CS1}) for each minimax-basis marking in the minimax-BRG of a net can be quite high if the net contains plenty of transitions with multiple input places; however, that complexity is linear in the number of nodes of the minimax-BRG.
\section{Conclusions and Future Work}\label{Con}
%************************************Original************************************************************************
%In this paper, we study the problem of nonblockingness verification of a plant.
%%We first define the unobstructiveness of a BRG, which can be verified by solving a set of ILPPs without constructing the reachability graph.
%A semi-structural method using minimax-BRG is developed, which can be used to determine the nonblockingness of a deadlock-free Petri net by checking its unobstructiveness.
%This approach is generalized to nets that are not deadlock-free by computing the set of non-final dead markings based on the set of minimax basis markings. Hence, to verify nonblockingness, one can first determine the existence of non-final deadlocks and later check the unobstructiveness of its minimax-BRG.
%The main advantages of our methods are that they do not require an exhaustive enumeration of the state space and have wide applicability.
%
%As for future work, we will investigate necessary and sufficient conditions for verifying nonblockingness in unbounded nets.
%Second, if a plant net is blocking, we plan to study the nonblockingness enforcement problem and develop a supervisor to guarantee the closed-loop system to be nonblocking.
%********************************************************************************************************************
In this paper, we studied the problem of nonblockingness verification of a plant.
%We first define the unobstructiveness of a BRG, which can be verified by solving a set of ILPPs without constructing the reachability graph.
A semi-structural method using minimax-BRG is developed, which can be used to determine the nonblockingness of a system modelled by bounded Petri nets by first determining the existence of non-final deadlocks and later checking the unobstructiveness of the corresponding minimax-BRG.
%{\red The main advantages of our methods are that they achieve practical efficiency and have wide applicability.}
The proposed approach does not require the construction of the reachability graph and has wide applicability.
%
%achieves practical efficiency in some considered cases.
As for future work, we will investigate necessary and sufficient conditions for verifying nonblockingness in unbounded nets.
Second, if a system is blocking, we plan to study the nonblockingness enforcement problem and develop a supervisor to guarantee the closed-loop system to be nonblocking.

\bibliographystyle{plain}        % Include this if you use bibtex
\bibliography{automatica2019}

\end{document}